\newcolumntype{C}{>{$}c<{$}}
\theoremstyle{plain}
\newtheorem{theorem}{Theorem}[section]
\newtheorem{lemma}[theorem]{Lemma}
\newtheorem{proposition}[theorem]{Proposition}
\newtheorem{cor}[theorem]{Corollary}
\newtheorem{conj}{Conjecture}
\theoremstyle{remark}
\newtheorem{rmk}{Remark}
\newcommand{\barr}{\begin{eqnarray}}
\newcommand{\earr}{\end{eqnarray}}
\newcommand{\be}{\begin{equation}}
\newcommand{\ee}{\end{equation}}
\newcommand{\numberset}{\mathbb}
\newcommand{\N}{\numberset{N}}
\newcommand{\Z}{\numberset{Z}}
\newcommand{\R}{\numberset{R}}
\newcommand{\tr}{\mathrm{Tr}}
\newcommand{\Tr}{\mathrm{Tr}}
\def\im{{\rm i}}
\newcommand{\E}{\mathbb{E}}
\begin{document}


\title
{Large-$N$ expansion for the time-delay matrix of ballistic chaotic cavities}

\author{Fabio Deelan Cunden}
\email{fabiodeelan.cunden@bristol.ac.uk}
 \affiliation{School of Mathematics, University of Bristol, University Walk, Bristol BS8 1TW, United Kingdom}

\author{Francesco Mezzadri}
\affiliation{School of Mathematics, University of Bristol, University Walk, Bristol BS8 1TW, United Kingdom}%

\author{Nick Simm}
\affiliation{Mathematics Institute, University of Warwick, Coventry CV4 7AL, United Kingdom}%

\author{Pierpaolo Vivo}
\affiliation{King's College London, Department of Mathematics, Strand, London WC2R 2LS, United Kingdom}%

\date{\today}

\begin{abstract}
We consider the $1/N$-expansion of the moments of the proper delay times for a ballistic chaotic cavity supporting $N$ scattering channels. In the random matrix approach, these moments correspond to traces of \emph{negative} powers of Wishart matrices. For systems with and without broken time reversal symmetry (Dyson indices $\beta=1$ and $\beta=2$) we obtain a recursion relation, which efficiently generates the coefficients of the $1/N$-expansion of the moments. The integrality of these coefficients and their possible diagrammatic interpretation is discussed. 
\end{abstract}

%
\maketitle
\tableofcontents

\section{Introduction and statement of the results}
\subsection{Background}
The Wigner-Smith~\cite{Smi60,Wig55,Eis48} time-delay matrix $Q$ plays a central role in the theory of quantum transport \cite{Fyodorov97,Texier15}. It is defined in terms of the $N$-channel
scattering matrix $S$ via the relation
\begin{equation}
  \label{eq:wdtm}
  Q = -\im\hbar S^{\dagger}(E)\frac{\partial S(E)}{\partial E}\ ,
\end{equation}
where $E$ is the energy of the incoming particle.
If $S$ is unitary, it is easily seen that $Q$ is Hermitian. The eigenvalues $\tau_1,\dots,\tau_N$ of $Q$ are called
\emph{proper delay times}. Apart from the scalar case ($N=1$), the individual proper delay times $\tau_k$ have no immediate physical meaning. Physically relevant quantities are instead unitarily invariant functions of $Q$ such as powers of traces $\Tr Q^k=\tau_1^k+\cdots+\tau_N^k$.  In fact, several measurable observables are entirely determined by these powers of traces. For instance, the Wigner delay time $\Tr Q$ is a bona fide measure of the time spent by an incident particle in the scattering region. Higher powers turn out to play a role in AC electronic transport~\cite{ButtThomPre93,GopMelBut96,ButPol05}~, e.g. in the low frequency expansion ($\omega\to0$) of the AC dimensionless conductance \mbox{$G(\omega)=[-\im\omega\Tr Q+(1/2)\omega^2\Tr Q^2+...]$}. 

Using the random matrix approach to ballistic chaotic scattering, with the assumption that the internal Hamiltonian of the cavity belongs to a unitarily invariant ensemble with a large number of number of bound states,
 Brouwer, Frahm and Beenakker~\cite{BroFraBee97_99} showed (in quite an ingenious
way) that the eigenvalues of $(N Q)^{-1}$ are distributed as those of matrices in the Laguerre
ensemble. In other words, denoting the latter \emph{rates} by $\lambda_1,\dots,\lambda_N$, their joint probability density is supported on $\R_{+}^N$ and proportional to
\be
\prod_{i<j}|\lambda_i-\lambda_j|^{\beta}\prod_{k}\lambda_k^{\beta N/2}e^{-\beta N\lambda_k/2}\ ,\label{eq:jpdf}
\ee
where $\beta\in\{1,2,4\}$ according to the physical symmetries of the cavity (hereafter the delay times $\tau_k$ are measured in units of the Heisenberg time $\tau_H$). The above joint distribution is valid in presence of completely transparent contacts between the cavity and the external world. The situation of perfect coupling is indeed the most relevant, because it is possible to reduce arbitrary non-ideal coupling to the ideal case using the procedure in \cite{Savin01,Savin01b}.

Our main result gives an asymptotic expansion as $N \to \infty$ for the average \emph{moments} ($k\geq0$)
\be
\tau_{k}^{(\beta)}= N^{k-1}\E[ \Tr Q^k]=\E\left[\frac{1}{N}\sum_{i=1}^N\lambda_i^{-k}\right]\ ,\label{eq:def1}
\ee
where the expectation $\mathbb{E}[\cdot]$ is taken with respect to \eqref{eq:jpdf}.
In particular, we derive explicit recurrence relations  which efficiently provide the coefficients of the expansion to all orders in $1/N$. 
\begin{theorem}[\cite{Simm12}] For all three symmetry classes $\beta\in\{1,2,4\}$ the following asymptotic expansion holds
\be
\tau_{k}^{(\beta)}=\sum_{g=0}^{\infty}\tau_{k,g}^{(\beta)}N^{-g}. \label{eq:sympexp}
\ee
\end{theorem}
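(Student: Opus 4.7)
The plan is to express the moments through the resolvent of the Laguerre $\beta$-ensemble and to derive the $1/N$-expansion order by order from the Schwinger-Dyson (loop) equations.

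First I would introduce
\[
W(x) = \mathbb{E}\left[\frac{1}{N}\sum_{i=1}^N \frac{1}{x-\lambda_i}\right].
\]
Since the equilibrium measure associated with the single-particle potential $V(\lambda)=\lambda-\log\lambda$ in \eqref{eq:jpdf} is the Marchenko-Pastur law, supported on an interval $[\lambda_-,\lambda_+]\subset(0,\infty)$ bounded away from the origin, $W(x)$ is analytic in a neighborhood of $x=0$ and
\[
W(x) = -\sum_{k\ge 0}\tau_{k+1}^{(\beta)}\,x^k.
\]
Thus proving \eqref{eq:sympexp} reduces to showing that $W(x)$ admits an asymptotic $1/N$-expansion uniformly on a small disk around $x=0$.

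Next I would derive the loop equation for $W$ by integration by parts against \eqref{eq:jpdf}. This yields a quadratic functional identity of the schematic form
\[
W(x)^2 - V'(x)W(x) + P_N(x) + \frac{1}{N}\Big(\tfrac{2}{\beta}-1\Big)\partial_x W(x) + \frac{1}{N^2}\tfrac{2}{\beta}W_2(x,x) = 0,
\]
where $P_N$ is a rational function analytic at $x=0$ and $W_2$ denotes the connected two-point correlator. Substituting the ansatz $W = \sum_g W^{(g)} N^{-g}$ (and analogously for $W_2$), one recovers at leading order the algebraic equation whose solution is the Marchenko-Pastur Stieltjes transform, and at each subsequent order a \emph{linear} equation for $W^{(g)}$ whose inhomogeneity is built from $W^{(0)},\dots,W^{(g-1)}$. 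Extracting Taylor coefficients at $x=0$ then produces the recursion for $\tau_{k,g}^{(\beta)}$.

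To upgrade this formal solution to a genuine asymptotic expansion, I would invoke the concentration-of-measure framework for $\beta$-ensembles with analytic, strictly convex, off-critical potentials, notably the work of Borot-Guionnet (and of Ercolani-McLaughlin for $\beta=2$). The main obstacle is that the negative moments probe $W(x)$ at $x=0$, a point lying \emph{outside} the limiting spectral support; the validity of the expansion there rests on uniform-in-$N$ tail estimates showing that the smallest eigenvalue stays bounded away from $0$ with overwhelming probability. This is ensured by the hard-edge confinement coming from the logarithmic term in $V$, and is the key technical ingredient that converts the bulk expansion of the resolvent into the stated expansion of its Taylor coefficients at the origin.
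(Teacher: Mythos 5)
Your proposal is a legitimate route to the statement, but it is genuinely different from the paper's. The paper does not prove this theorem here at all: it is imported from Ref.~\cite{Simm12}, where the expansion follows from exact finite-$N$ formulae for the moments (Laguerre orthogonal-polynomial identities, leading e.g.\ to \eqref{eq:exactMS}); indeed the recursions of Theorems~\ref{thm:recursionb2} and~\ref{thm:recursionb1} exhibit $\tau_k^{(\beta)}$ as an explicit rational function of $N$, bounded as $N\to\infty$ (see Appendix~\ref{app:num}), so \eqref{eq:sympexp} is simply its convergent geometric-series expansion at $N=\infty$. Your Schwinger--Dyson/Borot--Guionnet route trades this exact solvability for soft concentration estimates; what it buys is uniformity in $\beta$ (it covers all $\beta>0$, not just the classical values where Laguerre-polynomial formulae exist), at the cost of much heavier machinery and of delegating the actual existence of the all-order expansion to the cited a priori results. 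Two points in your sketch need tightening. First, the finite-$N$ resolvent $W$ is \emph{not} analytic in a neighbourhood of $x=0$: the eigenvalues fill $(0,\infty)$ with positive density, so $W$ is analytic only on $\C\setminus[0,\infty)$, and for fixed $N$ only finitely many of the negative moments $\tau_k^{(\beta)}$ are even finite (one needs $k<\beta N/2+1$). The correct formulation is to represent $N^{-1}\E[\sum_i\lambda_i^{-k}]$ as a contour integral of $x^{-k}W(x)$ around the limiting support $[3-\sqrt8,3+\sqrt8]$, valid on the event $\{\lambda_{\min}\ge\epsilon\}$, and to show that the complementary event contributes $O(e^{-cN})$ to the expectation of $\lambda_i^{-k}$ (not merely that it has small probability) --- you do flag this as the key ingredient, which is the right instinct, but it should be stated as a bound on the conditioned expectation rather than on $W$ near $0$. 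Second, for the potential $V(\lambda)=\lambda/2-\tfrac12\log\lambda$ the polynomial part of the loop equation is $P_N(x)=\tfrac{1}{2x}\,\E[N^{-1}\sum_i\lambda_i^{-1}]=\tfrac{1}{2x}$, which has a simple pole at $x=0$ rather than being analytic there; this is harmless (since $\tau_1^{(\beta)}=1$ exactly) but affects the bookkeeping when you extract coefficients. With those repairs the argument goes through and is, if anything, more general than what the paper relies on.
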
It is tempting to call~\eqref{eq:sympexp} a `genus' expansion. For complex Gaussian Hermitian matrices (GUE ensemble) the large-$N$ expansion of the moments indeed enumerates maps of given genus~\cite{Ercolani03,Zvonkin97}. Although we cannot prove that our expansion is related to enumeration of maps, we have strong evidence of an underlying enumeration problem also for the moments of the ensemble modelling the Wigner-Smith time-delay matrix.  More precisely, in this paper we extend a previous conjecture~\cite{Cunden16} on the integrality of the large-$N$ expansion coefficients for the cumulants of $\Tr Q^k$ beyond the leading order. 

Clearly, $\tau_{0}^{(\beta)}=1$. It is known that the average Wigner delay time is $\tau_1^{(\beta)}=1$ for every $N$ and $\beta$. There is no general result for $\tau_{k}^{(\beta)}$ for $k>1$ and generic $\beta$.  For $\beta=1$ (systems with time reversal symmetry) a finite-$N$ formula is available, but is too lengthy to be reported here, see~\cite[Eq. (27)]{Simm11}.
For $\beta=2$ the problem somehow simplifies. First, the expansion~\eqref{eq:sympexp} contains only powers of $N^{-2}$ (that is, $\tau_{k,g}^{(2)}=0$ if $g$ is odd); remarkably, two explicit finite-$N$ formulae for $\tau_{k}^{(2)}$ are available in the literature. We report them here.
\begin{proposition}[Eq. (19) of~\cite{Simm11} and Eq. (6) of~\cite{Novaes15I}] The moments of the proper delay times for $\beta=2$ are
\begin{subequations}
\barr
\displaystyle\tau_{k}^{(2)}&=& \displaystyle\frac{N^{k-1}}{k}\sum_{j=0}^{N-1}\binom{k+j-1}{k-1}\binom{k+j}{k-1}
\frac{\Gamma(2N-k-j)}{\Gamma(N-j)}\frac{\Gamma(N+1)}{\Gamma(2N)} \label{eq:exactMS}\\
&=& \displaystyle\frac{N^{k-1}}{k!}\sum_{j=0}^{k-1}(-1)^j\binom{k-1}{j}
\frac{\Gamma(N-j+k)}{\Gamma(N-j)}\frac{\Gamma(N+j+1-k)}{\Gamma(N+j+1)}\label{eq:exactNov}\ .
\earr
\end{subequations}
\end{proposition}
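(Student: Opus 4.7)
The plan is to exploit the fact that for $\beta=2$ the joint density \eqref{eq:jpdf} is, after the rescaling $x_i = N\lambda_i$, the density of the Laguerre Unitary Ensemble with weight $w(x) = x^N e^{-x}$ on $\R_+$. This ensemble is determinantal with correlation kernel built from the associated Laguerre polynomials $L_j^{(N)}$, so the one-point density is
\[
\rho_N(x) = x^N e^{-x}\sum_{j=0}^{N-1}\frac{j!}{\Gamma(N+j+1)}\bigl[L_j^{(N)}(x)\bigr]^2.
\]
From \eqref{eq:def1} one then has
\[
\tau_k^{(2)} = \frac{N^{k-1}}{N}\sum_{j=0}^{N-1}\frac{j!}{\Gamma(N+j+1)}\int_0^\infty x^{N-k}e^{-x}\bigl[L_j^{(N)}(x)\bigr]^2\,dx,
\]
and the problem reduces to evaluating this integral for each $j$ in closed form and carrying out the sum over $j$.

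To obtain \eqref{eq:exactMS} I would use the confluent hypergeometric representation $L_j^{(N)}(x) = \binom{N+j}{j}\,{}_1F_1(-j;N+1;x)$ to expand the squared polynomial as a double series, integrate term by term using $\int_0^\infty x^{N-k+m}e^{-x}\,dx = \Gamma(N-k+m+1)$, and then collapse one of the inner sums by the Chu--Vandermonde identity. Re-indexing the remaining sum over $j$ and combining Pochhammer symbols into binomial coefficients then brings the expression into the form \eqref{eq:exactMS}.

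Formula \eqref{eq:exactNov} is more naturally reached by a route that bypasses the orthogonal-polynomial expansion. I would start from the Selberg-type integral representation of the normalisation of \eqref{eq:jpdf} and insert a factor $\prod_i \lambda_i^{s}$, computing the resulting deformed partition function as an explicit ratio of gamma products; extracting $\E\sum_i \lambda_i^{-k}$ by a suitable residue or finite-difference operation in the parameter $s$ gives a single finite sum in which the summation index labels the location of the pole being picked up, which after simplification is exactly \eqref{eq:exactNov}. An equivalent route is to prove directly that the right-hand sides of \eqref{eq:exactMS} and \eqref{eq:exactNov} are two representations of the same balanced ${}_3F_2$, connected by a classical Thomae--Saalsch\"utz transformation.

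The main obstacle in both routes is the hypergeometric summation step. The natural expressions that emerge from either the Christoffel--Darboux expansion or the Selberg-integral manipulation are double or triple finite sums, and reducing them to a manifestly single finite sum requires identifying the correct balanced ${}_3F_2$ and invoking a non-trivial classical identity (Chu--Vandermonde, Saalsch\"utz, or Thomae). Useful consistency checks throughout are that both formulas must reduce to $\tau_1^{(2)}=1$ for every $N$, agree numerically on small cases, and reproduce the correct large-$N$ leading behaviour implied by \eqref{eq:sympexp}.
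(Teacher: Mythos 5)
First, note that the paper does not prove this Proposition: it is quoted verbatim from Eq.~(19) of \cite{Simm11} and Eq.~(6) of \cite{Novaes15I}, and the surrounding text states explicitly that the two formulae were ``derived independently with two different methods'' in those references. So your proposal must be judged against those derivations, not against anything in the present paper. Your first route --- rescale $x_i=N\lambda_i$ to obtain the Laguerre unitary ensemble with weight $x^{N}e^{-x}$, write the mean density via the Christoffel--Darboux sum $x^{N}e^{-x}\sum_{j=0}^{N-1}\frac{j!}{\Gamma(N+j+1)}[L_j^{(N)}(x)]^2$, and evaluate $\int_0^\infty x^{N-k}e^{-x}[L_j^{(N)}(x)]^2\,\mathrm{d}x$ as a terminating ${}_3F_2$ --- is sound in outline and close in spirit to the orthogonal-polynomial derivation of \eqref{eq:exactMS} in \cite{Simm11}. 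Two caveats: your prefactor is off by a factor of $N$ (with your $\rho_N$, which integrates to $N$, the correct statement is $\tau_k^{(2)}=N^{k-1}\sum_j(\cdots)$, not $N^{k-1}/N$ times the sum; your own consistency check $\tau_1^{(2)}=1$ would catch this), and the $j$-th term of \eqref{eq:exactMS} is \emph{not} the $j$-th Christoffel--Darboux contribution (already at $k=1$ the latter are all equal to $1/N$ while the former depend on $j$), so the last step is a genuine double resummation over $j$ and the hypergeometric indices, not a re-indexing; it can be pushed through, but it is the actual content of the proof and is only gestured at.

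The real gap is in your second route. Inserting $\prod_i\lambda_i^{s}$ into the Selberg/Laguerre integral computes $\E[\det W^{s}]$ (equivalently, it shifts the Laguerre parameter), and no residue or finite-difference operation in the single parameter $s$ can produce the \emph{linear statistic} $\E[\sum_i\lambda_i^{-k}]$: derivatives in $s$ generate moments of $\sum_i\log\lambda_i$, not of $\sum_i\lambda_i^{-k}$. To reach \eqref{eq:exactNov} one needs either a one-eigenvalue deformation (insert $\lambda_1^{s}$ and symmetrise) or, as is actually done in \cite{Novaes15I}, the expansion of the power sum in Schur functions combined with the Gamma-ratio evaluation of Schur-function averages over the Laguerre ensemble. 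Your fallback --- identifying \eqref{eq:exactMS} and \eqref{eq:exactNov} as two forms of one balanced hypergeometric series related by a Thomae-type transformation --- is legitimate in principle, but since the two sums have different lengths ($N$ terms versus $k$ terms) this is exactly the nontrivial summation step you yourself flag as ``the main obstacle'' and never carry out. As it stands, the proposal is a plausible plan for \eqref{eq:exactMS} with a normalisation slip, and an unworkable-as-stated plan for \eqref{eq:exactNov}.
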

These two formulae have been derived independently with two different methods. Formula \eqref{eq:exactNov} is quite hard to extract large-$N$ asymptotics from (but this is possible, in principle, by using methods similar to those discussed in \cite{Vivo10,Krattenthaler10}). Formula \eqref{eq:exactMS} behaves better than \eqref{eq:exactNov}; although the number of terms in the sum is unbounded for large $N$, it is a sum of positive terms and the only asymptotic analysis one needs is the complete asymptotic series for the ratio of two Gamma functions. Along these lines of reasoning, the first three terms $\tau_{k,g}^{(2)}$ ($g=0,2,4$) of the large-$N$ expansion of $\tau_{k}^{(2)}$ have been obtained in \cite{Simm12}. We also recall that the leading order coefficients $\tau_{k,0}^{(\beta)}$ are independent of $\beta$ and given by the \emph{large Schr\"oder numbers}~\cite{Schroder}
\be
\tau^{(\beta)}_{k,0}={}_2F_1(1 - k,  k; 2; -1)\ ,
\ee 
where $_2F_1$ is the classical Gauss hypergeometric function, and whose generating function has been computed explicitly in~\cite{BerkKuip11,Cunden14}. A systematic study of the $1/N$-expansion for the time-delay matrix moments for $\beta=1$ and $\beta=2$ is our objective.

\subsection{Main results} Our central result is a recursion relation for the coefficients $\tau_{k,g}^{(\beta)}$ of the asymptotic expansion~\eqref{eq:sympexp}. As discussed in~\cite{Cunden16}, the Laguerre measure~\eqref{eq:jpdf} defines a \emph{one-cut $\beta$-ensemble}. In particular, as $N\to\infty$, the one-point marginal of~\eqref{eq:jpdf} concentrates on a single interval whose edges are $3\pm\sqrt{8}$. In this paper we denote the monic polynomial vanishing at the edges  by $y(z)=z^2-6z+1$. (This is the so called \emph{spectral curve} of the $\beta$-ensemble \eqref{eq:jpdf}.)

\begin{theorem}[Finite-$N$ recursion at $\beta=2$]\label{thm:recursionb2}
For every integer $k\geq1$, and every $N\geq1$,
\be
(N^{2}-k^{2})(k+1)\tau_{k+1}^{(2)}-3N^2(2k-1)\tau_{k}^{(2)}+N^2(k-2)\tau_{k-1}^{(2)}=0\ ,\label{eq:recformulab2}
\ee
with $\tau_{0}^{(2)}=\tau_{1}^{(2)}=1$.
\end{theorem}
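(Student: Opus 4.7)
My plan is to derive \eqref{eq:recformulab2} directly from the closed-form formula \eqref{eq:exactNov} by Zeilberger's creative telescoping method. Introducing the summand
\[
f(k,j) := (-1)^j\binom{k-1}{j}\frac{\Gamma(N-j+k)\,\Gamma(N+j+1-k)}{\Gamma(N-j)\,\Gamma(N+j+1)},
\]
equation \eqref{eq:exactNov} reads $k!\,N^{1-k}\,\tau_k^{(2)} = \sum_{j\ge 0} f(k,j)$, the sum being effectively finite since $f(k,j)=0$ for $j\ge k$. After dividing \eqref{eq:recformulab2} through by the common prefactor $N^{k-2}/(k-1)!$, the recursion becomes equivalent to the summation identity
\[
\sum_{j\ge 0}\Bigl[(N^2-k^2)\,f(k+1,j)-3N(2k-1)\,f(k,j)+k(k-2)\,f(k-1,j)\Bigr]=0. \qquad(\ast)
\]

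The shift ratios $f(k+1,j)/f(k,j)$ and $f(k,j+1)/f(k,j)$ are rational in $(k,j,N)$, so $f$ is a proper hypergeometric term in the Wilf--Zeilberger sense. The strategy is to exhibit a rational certificate $R(k,j,N)$ for which
\[
(N^2-k^2)\,f(k+1,j)-3N(2k-1)\,f(k,j)+k(k-2)\,f(k-1,j)=G(k,j+1)-G(k,j),
\]
with $G(k,j):=R(k,j,N)\,f(k,j)$. Summing over $j$, the right-hand side telescopes; because $f(k,j)$ vanishes outside a finite window in $j$, the boundary contributions cancel, which yields $(\ast)$. The initial conditions $\tau_0^{(2)}=\tau_1^{(2)}=1$ are classical (the latter being the mean Wigner delay time, equal to $1$ for all $N$ and $\beta$), pinning down the solution uniquely.

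The main obstacle is the explicit construction of $R$ and the verification of the telescoping identity. In principle this is mechanical: multiplying both sides by $f(k,j)^{-1}$ and clearing denominators converts the telescoping identity into a polynomial equation in $j$, with coefficients depending rationally on $k$ and $N$, whose vanishing can be checked power by power. The coefficient polynomials in $(\ast)$ are guessed first by matching leading powers of $j$ in an ansatz for $R$ that is linear in $j$ divided by a small product of linear factors suggested by the factorised forms of the shift ratios; once the ansatz is fixed, the verification is heavy but structured and is most efficiently delegated to a computer algebra system. As an independent cross-check, one can confirm \eqref{eq:recformulab2} numerically for small $k$ and $N$ against the values produced by \eqref{eq:exactMS}.
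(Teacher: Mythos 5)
Your route is genuinely different from the paper's. The paper never touches the explicit sum \eqref{eq:exactNov}; it obtains \eqref{eq:recformulab2} as the specialization $\tau_k^{(2)}=N^{k-1}D_N^{(2)}(-k,\alpha=N)$ of the general Wishart recursion \eqref{eq:recusionalpha}, which in turn is extracted from the Haagerup--Thorbj{\o}rnsen differential equation \eqref{ode} for $M_N^{(2)}(s)=\E[\Tr(W_Ne^{sW_N})]$ by multiplying by $s^{k}$, integrating by parts over $(-\infty,0]$, and using the integral representation \eqref{gfident} of negative moments. Your reduction of \eqref{eq:recformulab2} to the summand identity $(\ast)$ is correct (the common factor to divide out is $N^{k}/k!$ rather than $N^{k-2}/(k-1)!$, but the displayed identity $(\ast)$ is the right one), and $f(k,j)$ is indeed a proper hypergeometric term, so creative telescoping is a legitimate strategy and, if completed, would give an independent and arguably more elementary proof for $\beta=2$ (though, unlike the paper's method, it has no analogue at $\beta=1$, where no Novaes-type sum is available).

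The gap is that the certificate $R(k,j,N)$ is neither exhibited nor shown to exist. Wilf--Zeilberger theory guarantees that $S(k)=\sum_j f(k,j)$ satisfies \emph{some} recurrence with polynomial coefficients provable by telescoping, but not that the \emph{particular} order-two operator in $(\ast)$ lies in the image of the telescoping map: Zeilberger's algorithm could a priori return a non-proportional order-two operator, or one of higher order, in which case deducing $(\ast)$ requires further work (e.g., combining the two operators into an order-one relation and checking that directly). Until the certificate is actually computed and printed, what you have is a proof plan, not a proof; a numerical cross-check against \eqref{eq:exactMS} for small $k,N$ cannot close this, since the claim is for all $k$ and $N$. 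Two further points need care when you do complete it: (i) the boundary terms require that $G(k,j)=R(k,j,N)f(k,j)$ vanish for $j$ outside the support of $f$, which fails if $R$ has poles at those integers, so this must be checked for the explicit $R$; and (ii) the case $k=1$ must be treated separately, because $f(0,j)=1$ for all $j\geq 0$ (the sum \eqref{eq:exactNov} does not represent $\tau_0^{(2)}$ and has infinite support at $k=0$), so the telescoping argument only applies for $k\geq 2$ and the $k=1$ instance of \eqref{eq:recformulab2} is a direct verification that $(N^2-1)\tau_2^{(2)}=2N^2$. Finally, \eqref{eq:exactNov} is valid for $N$ large enough that the negative moments converge; since all quantities are rational in $N$, the identity extends to all $N$, but this continuation should be stated.
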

The recursion formula~\eqref{eq:recformulab2} is much more efficient than~\eqref{eq:exactMS} and~\eqref{eq:exactNov} to generate tables of the moments of $Q$. See Appendix~\ref{app:num}. 
From Theorem~\ref{thm:recursionb2}
one obtains the recursion for the large-$N$ coefficients $\tau_{k,g}^{(2)}$ as a corollary.
\begin{cor}[Double-recursion for the large-$N$ coefficients at $\beta=2$]\label{cor:mainb2} The coefficients $\tau_{k,g}^{(2)}$ satisfy the homogeneous linear recurrence 
\be
(k+1)\tau^{(2)}_{k+1,g+2}-3(2k-1)\tau^{(2)}_{k,g+2}+(k-2)\tau^{(2)}_{k-1,g+2}-k^2(k+1)\tau^{(2)}_{k+1,g}=0\ ,\\
 \label{eq:recoeff}
\ee
for $g\geq0$ and $k\geq1$, with initial conditions 
\barr
\tau^{(2)}_{k,0}={}_2F_1(1 - k,  k; 2; -1),\quad \tau^{(2)}_{k,1}=0,\quad \tau^{(2)}_{0,g}=\delta_{0,g},\quad \tau^{(2)}_{1,g}=\delta_{0,g}\ .
\earr
In particular, all coefficients $\tau_{k,g}^{(2)}$ with odd  $g$ vanish identically.
\end{cor}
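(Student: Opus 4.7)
The strategy is a direct substitution argument: feed the $1/N$-expansion provided by Theorem 1 into the finite-$N$ recursion of Theorem~\ref{thm:recursionb2} and compare coefficients of each power of $N^{-1}$. As a preliminary step, I would separate the $N$-dependent pieces in \eqref{eq:recformulab2} by expanding $(N^{2}-k^{2})(k+1)\tau_{k+1}^{(2)}$, which rewrites the relation as
\begin{equation*}
N^{2}\bigl[(k+1)\tau_{k+1}^{(2)}-3(2k-1)\tau_{k}^{(2)}+(k-2)\tau_{k-1}^{(2)}\bigr]=k^{2}(k+1)\tau_{k+1}^{(2)}.
\end{equation*}
This already isolates the ``anomalous'' term $k^{2}(k+1)\tau_{k+1}^{(2)}$ that generates the shift $g\mapsto g+2$ in the corollary's recurrence.

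Next, I would substitute $\tau_{k}^{(2)}=\sum_{g\ge 0}\tau_{k,g}^{(2)}N^{-g}$ and match coefficients of $N^{-(g+2)}$ on both sides. The LHS produces $(k+1)\tau_{k+1,g+2}^{(2)}-3(2k-1)\tau_{k,g+2}^{(2)}+(k-2)\tau_{k-1,g+2}^{(2)}$, the RHS produces $k^{2}(k+1)\tau_{k+1,g}^{(2)}$, and equating them yields \eqref{eq:recoeff}. The initial data $\tau_{0,g}^{(2)}=\tau_{1,g}^{(2)}=\delta_{0,g}$ is inherited directly from the finite-$N$ values $\tau_{0}^{(2)}=\tau_{1}^{(2)}=1$, whose asymptotic series is trivial.

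The remaining initial data $\tau_{k,0}^{(2)}$ and $\tau_{k,1}^{(2)}$ come out of the same matching procedure at the leading two orders, where the RHS contributes nothing. Comparing coefficients of $N^{0}$ gives $(k+1)\tau_{k+1,0}^{(2)}-3(2k-1)\tau_{k,0}^{(2)}+(k-2)\tau_{k-1,0}^{(2)}=0$, which is the standard recurrence for the (shifted) large Schr\"oder numbers; together with $\tau_{0,0}^{(2)}=\tau_{1,0}^{(2)}=1$ it identifies $\tau_{k,0}^{(2)}={}_2F_1(1-k,k;2;-1)$ via the known hypergeometric representation. Comparing coefficients of $N^{-1}$ gives the same homogeneous three-term recurrence for $\tau_{k,1}^{(2)}$, but now with the vanishing initial data $\tau_{0,1}^{(2)}=\tau_{1,1}^{(2)}=0$, forcing $\tau_{k,1}^{(2)}=0$ for all $k$.

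The vanishing of $\tau_{k,g}^{(2)}$ for every odd $g$ then follows by a short induction on $g$: once one knows $\tau_{k,g}^{(2)}\equiv 0$ for some odd $g$, the inhomogeneity in \eqref{eq:recoeff} disappears and the resulting three-term recurrence in $k$ for $\tau_{\cdot,g+2}^{(2)}$ is homogeneous with zero boundary data at $k=0,1$, hence $\tau_{k,g+2}^{(2)}\equiv 0$. The argument is essentially mechanical; the only conceptual point requiring care is the legitimacy of coefficient-wise matching, which is justified because Theorem~1 asserts the existence of the expansion \eqref{eq:sympexp} as an asymptotic series, so the coefficients are uniquely determined order by order.
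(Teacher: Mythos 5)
Your proposal is correct and is exactly the argument the paper intends (the paper states the corollary follows from Theorem~\ref{thm:recursionb2} without spelling out the substitution): isolate the $N^{2}$ factor, insert the asymptotic series \eqref{eq:sympexp}, equate coefficients order by order, and handle the two leading orders and the odd-$g$ vanishing as you do. The only blemish is a harmless off-by-two in your bookkeeping (equating coefficients of $N^{-g}$ in $N^{2}[\cdots]=k^{2}(k+1)\tau_{k+1}^{(2)}$ is what directly produces the stated pairing of $g+2$ with $g$), which does not affect the conclusion.
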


Analogous results can be obtained for $\beta=1$. In this case we found that $\tau^{(1)}_k$ (and therefore $\tau^{(1)}_{k,g}$) satisfy an inhomogeneous recursion.
\begin{theorem}[Finite-$N$ recursion  at $\beta=1$]\label{thm:recursionb1}
For every integer $k\geq1$, and every $N\geq1$,
\begin{subequations}
\be
(4k(k+1)+1-(N+1)^2)\tau_{k+1}^{(1)}+6N^2\tau_{k}^{(1)}-N^2\tau_{k-1}^{(1)}
=\frac{3}{k+1}((k+3N)Nb_k-N^2b_{k-1})\ ,\label{eq:recformulab1}
\ee
where the auxiliary sequence $b_{k}$ is the solution of 
\be
((N+1)^2-k^2)(k+1)b_{k+1}-(3N-1)(2k-1)Nb_{k}+(k-2)N^2b_{k-1}=0\ ,\label{eq:recformulab1bis}
\ee
\end{subequations}
with $\tau_{0}^{(1)}=\tau_{1}^{(1)}=1$, $b_0=\frac{N-1}{N}$ and $b_1=\frac{N-1}{N+1}$\ .
\end{theorem}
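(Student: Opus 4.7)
The plan is to parallel the strategy behind Theorem~\ref{thm:recursionb2}: start from a closed finite-$N$ expression for $\tau_k^{(1)}$ and extract a recurrence in $k$ by creative telescoping. The only such expression presently available is the one in \cite[Eq.~(27)]{Simm11}, which---reflecting the Pfaffian (as opposed to determinantal) structure of the $\beta=1$ Laguerre ensemble---naturally splits into a ``principal'' hypergeometric sum morally parallel to \eqref{eq:exactMS} and a ``correction'' sum coming from the skew-symmetric block of the pair correlation kernel. The first step is to identify, up to an explicit $N$-dependent normalisation, the auxiliary sequence $b_k$ with this correction sum; the initial values $b_0=(N-1)/N$ and $b_1=(N-1)/(N+1)$ are then fixed by direct evaluation at $k=0,1$.

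With the decomposition in hand, the remaining steps are largely mechanical. I would apply the Gosper--Zeilberger algorithm to the principal sum to obtain a three-term recurrence in $k$ for $\tau_k^{(1)}$; because that sum is not closed under the shift $k\mapsto k+1$, the resulting identity is inhomogeneous, picking up on its right-hand side a combination of $b_k$ and $b_{k-1}$ with the explicit rational coefficients of \eqref{eq:recformulab1}. Applying the same algorithm to the correction sum itself then yields the homogeneous three-term recurrence \eqref{eq:recformulab1bis} for $b_k$. The seeds $\tau_0^{(1)}=\tau_1^{(1)}=1$ reflect the universal identities $\E[\Tr Q^0]=N$ and $\E[\Tr Q]=1$, which hold for every $\beta$ and $N$.

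The main obstacle is not the telescoping step, which is algorithmic, but the choice of splitting itself: a generic decomposition $\tau_k^{(1)}=(\text{main})+(\text{aux})$ produces an auxiliary family whose recurrence does not close, or whose closure couples back into $\tau_k^{(1)}$ at higher orders, leading to an infinite-dimensional linear system. Finding the decomposition for which the coupled system closes to first order---so that $b_k$ obeys its \emph{own} three-term recurrence and only $b_k, b_{k-1}$ enter the inhomogeneous part of \eqref{eq:recformulab1}---is the key step, and it is a genuinely $\beta=1$-specific reflection of the Pfaffian structure. As a cross-check one could also derive both recursions from the loop (Virasoro) equations of the Laguerre $\beta$-ensemble applied to the negative-moment resolvent $W(z)=\E[\Tr(z-NQ)^{-1}]$: the anomalous term $(2/\beta-1)W'(z)$, which vanishes at $\beta=2$ and produces \eqref{eq:recformulab2} directly, would at $\beta=1$ couple the expansion to an auxiliary family playing the role of $b_k$, exactly as observed here.
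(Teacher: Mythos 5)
Your plan is not the paper's route, and as it stands it has a genuine gap at exactly the point you yourself flag as the ``key step''. You propose to split the finite-$N$ formula of \cite[Eq.~(27)]{Simm11} into a principal sum plus a correction sum, identify $b_k$ with the correction, and let Gosper--Zeilberger do the rest. But you never exhibit such a decomposition, nor verify that the correction piece is the sequence $b_k$ with the stated seeds, nor that creative telescoping applied to the principal piece closes into a first-order-coupled inhomogeneous recurrence with precisely the right-hand side of \eqref{eq:recformulab1}. Telescoping a single hypergeometric sum yields a homogeneous recurrence (generally of higher order) for that sum alone; producing the specific coupled system \eqref{eq:recformulab1}--\eqref{eq:recformulab1bis} requires knowing in advance what $b_k$ is, which is the content of the theorem rather than an input to it. So the proposal is a plausible research plan, not a proof.

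The paper resolves the difficulty at the level of the eigenvalue density rather than of a finite-$N$ moment formula: by \cite{AFNM00} (Eq.~\eqref{beta1dens}) one has $\rho^{(1)}_N=\rho^{(2)}_{N-1}+(\text{skew-orthogonal correction})$, and following Ledoux's GOE computation \cite{Ledoux} this yields an inhomogeneous second-order ODE \eqref{odeb1} for $M_N^{(1)}(s)=\E[\Tr(W_Ne^{sW_N})]$ whose source term involves $M^{(2)}_{N-1}$. Integrating the ODE against $s^{k}$ via Lemma~\ref{lem:1} gives the inhomogeneous moment recursion \eqref{eq:recusionalphab1}, and specializing $\alpha=N+1$ gives \eqref{eq:recformulab1}. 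Crucially, this derivation \emph{identifies} $b_k=N^{k-1}D^{(2)}_{N-1}(-k,\alpha=N+1)$: the auxiliary sequence is itself a family of $\beta=2$ inverse moments at size $N-1$, so the homogeneous recursion \eqref{eq:recformulab1bis} is simply the $\beta=2$ recursion \eqref{eq:recusionalpha} in disguise --- it is not an ad hoc closure one must hunt for. Your intuition that the $\beta=1$ problem splits into a $\beta=2$-like piece plus a Pfaffian correction is correct in spirit, and your loop-equation cross-check (the anomalous $(2/\beta-1)$ term coupling in an auxiliary family) is consistent with this structure, but both are only sketched. To turn the proposal into a proof you would need either to carry out the decomposition of \cite[Eq.~(27)]{Simm11} explicitly and prove the identification of its pieces with $\tau^{(1)}_k$ and $b_k$, or to adopt the density-level decomposition as the paper does.
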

\begin{cor}[Double-recursion for the large-$N$ coefficients at $\beta=1$]\label{cor:mainb1} The coefficients $\tau_{k,g}^{(1)}$ satisfy the inhomogeneous linear recurrence 
\begin{subequations}
\be
\displaystyle\tau^{(1)}_{k+1,g+1}-6\tau^{(1)}_{k,g+1}+\tau^{(1)}_{k-1,g+1}+2\tau^{(1)}_{k+1,g}-4k(k+1)\tau^{(1)}_{k+1,g-1}\\
=\displaystyle\frac{3}{k+1}\left(b_{k-1,g+1}-3b_{k,g+1}-kb_{k,g}\right),\label{eq:recoeffb1}
\ee
where the auxiliary sequence $b_{k,g}$ is the solution of 
\be
b_{k+1,g+1}-3\frac{2k-1}{k+1}b_{k,g+1}+\frac{k-2}{k+1}b_{k-1,g+1}+2b_{k+1,g}+\frac{2k-1}{k+1}b_{k,g}+(1-k^2)b_{k+1,g-1}=0,\label{eq:recoeffb1bis}
\ee
\end{subequations}
for $g\geq0$ and $k\geq1$. The initial conditions are 
\barr
\tau^{(1)}_{k,0}&=&{}_2F_1(1 - k,  k; 2; -1),\quad \tau^{(1)}_{k,1}=\frac{1}{k!}\partial_z^{(k)}\left(\frac{1-3z-\sqrt{y(z)}}{2y(z)}\right)\Big|_{z=0}\ ,\\
 \tau^{(1)}_{0,g}&=&\delta_{0,g},\quad\tau^{(1)}_{1,g}=\delta_{0,g}\ ,\\
b_{k,0}&=&{}_2F_1(1 - k,  k; 2; -1),\quad
b_{k,1}=\frac{1}{k!}\partial_z^{(k)}\left(-\frac{z+1+3\sqrt{y(z)}}{2\sqrt{y(z)}}\right)\Big|_{z=0}\ ,\\
b_{0,g}&=&\delta_{0,g}-\delta_{1,g},\quad
b_{1,g}=(-1)^g(2-\delta_{0,g})\ .
\earr
\end{cor}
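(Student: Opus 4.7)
The plan is to substitute the asymptotic expansions $\tau_k^{(1)}=\sum_{g\geq 0}\tau_{k,g}^{(1)}N^{-g}$ and $b_k=\sum_{g\geq 0}b_{k,g}N^{-g}$ directly into the two finite-$N$ recursions of Theorem \ref{thm:recursionb1} and equate coefficients of equal powers of $1/N$. Existence of the expansion for $\tau_k^{(1)}$ is the theorem from \cite{Simm12} stated at the beginning of the paper; existence of a matching expansion for $b_k$ is automatic once one inspects the recursion \eqref{eq:recformulab1bis}, since the seeds $b_0$ and $b_1$ are rational functions of $N$ that admit formal $1/N$-expansions and the recursion preserves this property.

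For \eqref{eq:recformulab1}, I would first expand $(N+1)^2=N^2+2N+1$, divide through by $-N^2$, and rewrite the equation as
\begin{equation*}
\left(1+\tfrac{2}{N}-\tfrac{4k(k+1)}{N^{2}}\right)\tau_{k+1}^{(1)}
-6\tau_{k}^{(1)}+\tau_{k-1}^{(1)}
=\tfrac{3}{k+1}\bigl(b_{k-1}-3b_{k}\bigr)-\tfrac{3k}{N(k+1)}b_{k}.
\end{equation*}
Inserting the ansatz and reading off the coefficient of $N^{-(g+1)}$ gives exactly \eqref{eq:recoeffb1}. For the auxiliary equation \eqref{eq:recformulab1bis} I would divide by $N^{2}(k+1)$ to get
\begin{equation*}
\left(1+\tfrac{2}{N}+\tfrac{1-k^{2}}{N^{2}}\right)b_{k+1}
-\left(3-\tfrac{1}{N}\right)\tfrac{2k-1}{k+1}b_{k}+\tfrac{k-2}{k+1}b_{k-1}=0,
\end{equation*}
and the same coefficient extraction yields \eqref{eq:recoeffb1bis}. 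This step is purely mechanical.

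The initial data split into boundaries in $k$ and boundaries in $g$. The $k$-boundary values $\tau_{0,g}^{(1)}=\tau_{1,g}^{(1)}=\delta_{0,g}$ follow from $\tau_{0}^{(1)}=\tau_{1}^{(1)}=1$; for $b$, I would expand $b_{0}=1-1/N$ and $b_{1}=(N-1)/(N+1)=1+2\sum_{g\geq 1}(-1)^{g}N^{-g}$ to confirm $b_{0,g}=\delta_{0,g}-\delta_{1,g}$ and $b_{1,g}=(-1)^{g}(2-\delta_{0,g})$. The $g$-boundary values $\tau_{k,0}^{(1)}=b_{k,0}={}_{2}F_{1}(1-k,k;2;-1)$ come from the leading (planar) order of the negative moments, which is $\beta$-independent and already known to be the large Schröder numbers through the spectral curve $y(z)=z^{2}-6z+1$.

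The genuinely non-routine step is the identification of $\tau_{k,1}^{(1)}$ and $b_{k,1}$ as the Taylor coefficients at $z=0$ of the stated algebraic functions involving $\sqrt{y(z)}$. I would obtain this by writing $\tau_{k}^{(1)}=-\,[z^{k-1}]\,W(z)$ for the resolvent $W(z)=\mathbb{E}[N^{-1}\sum_{i}(z-\lambda_{i})^{-1}]$, expanding $W=\sum_{g}W_{g}N^{-g}$, and computing $W_{1}$ from the linear loop equation for the Laguerre $\beta=1$ ensemble. The same procedure applied to the generating function of $b_{k}$ (which one must first identify from \eqref{eq:recformulab1bis} and its seeds) produces the second initial profile. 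Once these four boundary profiles are in place, the coupled system \eqref{eq:recoeffb1}--\eqref{eq:recoeffb1bis} propagates uniquely in both $k$ and $g$, and the corollary follows.
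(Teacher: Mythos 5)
Your proposal is correct and is essentially the paper's own (implicit) argument: the corollary is obtained precisely by dividing the finite-$N$ recursions of Theorem~\ref{thm:recursionb1} by $N^{2}$ and reading off the coefficient of $N^{-(g+1)}$, with the $k$-boundaries coming from $\tau_{0}^{(1)}=\tau_{1}^{(1)}=1$, $b_{0}=1-1/N$, $b_{1}=(N-1)/(N+1)$ and the $g$-seeds from the planar limit and the first correction to the resolvent. One detail your loop-equation computation of the $g=1$ seeds would expose is that the stated generating function for $b_{k,1}$ should be $-\frac{z+1+\sqrt{y(z)}}{2\sqrt{y(z)}}$ (consistent with $f_{1}$ in Corollary~\ref{cor:rec2simp1b1} and with $b_{0,1}=-1$), rather than the printed version containing the factor $3$.
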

Again, the recursive formulae~\eqref{eq:recformulab1}-\eqref{eq:recformulab1bis} and~\eqref{eq:recoeffb1}-\eqref{eq:recoeffb1bis} can efficiently generate tables of moments and their large-$N$ expansion. See Appendix~\ref{app:num}. 

The proof of Theorem~\ref{thm:recursionb2} and Theorem~\ref{thm:recursionb1} is given in Section~\ref{sec:Wishart}. 
\subsection{Generating functions}
In this section, we derive explicit formulae for the generating functions of $\tau^{(\beta)}_{k,g}$.
Let us consider the formal power series 
\be
\varphi^{(\beta)}(z,\zeta)=\sum_{k,g= 0}^{\infty} \tau^{(\beta)}_{k,g}z^k\zeta^{g}.\label{eq:varphi}
\ee
Using the recursions of Corollary~\ref{cor:mainb2} and Corollary~\ref{cor:mainb1} it is possible to obtain a differential equation for $\varphi^{(\beta)}(z,\zeta)$.
For instance, for $\beta=2$, the generating function $\varphi^{(2)}(z,\zeta)$ satisfies
\be
\zeta^2 z^2 \varphi^{(2)}_{zzz}+\zeta^2 z \varphi^{(2)}_{zz}-y(z)\varphi^{(2)}_{z}+\frac{y'(z)}{2}\varphi^{(2)} +4= 0\ ,\label{eq:3rd}
\ee
but this third-order inhomogeneous differential equation~\eqref{eq:3rd} is not as tractable. (This was to be expected, since $\varphi^{(\beta)}(z,\zeta)$ is only a formal power series.) To make some further progress in the problem we introduce  the `partial' generating functions
\barr
F_{g}^{(\beta)}(z)&=&\sum_{k=0}^{\infty}\tau^{(\beta)}_{k,g}z^k\ ,\\
J_{k}^{(\beta)}(\zeta)&=&\sum_{g=0}^{\infty}\tau^{(\beta)}_{k,g}\zeta^g\ .
\earr
The series $\varphi^{(\beta)}(z,\zeta)$ , $F_{g}^{(\beta)}(z)$ and $J_{k}^{(\beta)}(\zeta)$ are of course related by
\barr
\varphi^{(\beta)}(z,\zeta)=\sum_{g=0}^{\infty}F_{g}^{(\beta)}(z)\zeta^g=\sum_{k=0}^{\infty}J_{k}^{(\beta)}(\zeta)z^k\ .
\earr
\begin{rmk}
Note that $\varphi^{(\beta)}(z,N^{-1})$ is the generating function of the finite-$N$ moments $\tau_{k}^{(\beta)}$ and $J_k^{(\beta)}(N^{-1})=\tau_{k}^{(\beta)}$. The partial generating functions $F_{g}^{(\beta)}(z)$ are central objects in the perturbative semiclassical approach. 
\end{rmk}
The partial generating functions have a remarkably simple (i.e. algebraic)  structure. We discuss first the case $\beta=2$.
\begin{cor}\label{cor:rec2simp1} If $g$ is odd, $F_{g}^{(2)}(z)=0$. For $g$ even, the generating function $F_{g}^{(2)}(z)$ satisfies
\be
\begin{cases}
F_{g+2}^{(2)}(z) = \sqrt{y(z)}\displaystyle\int_{0}^{z}\frac{\mathrm{d} x}{y(x)^{3/2}}\left\{x^{2}F_{g}^{(2)\prime\prime\prime}(x)+xF_{g}^{(2)\prime\prime}(x)\right\},\\
F_{0}^{(2)}(z) =\displaystyle\frac{3-z-\sqrt{y(z)}}{2}\ ,
\end{cases} \label{eq:diff1}
\ee
and has the following functional form ($g\geq1$)
\begin{equation}
F_{g}^{(2)}(z) = \displaystyle\frac{R_{g}(z)}{y(z)^{(3g-1)/2}},\label{eq:R_g}\\
\end{equation}
where $R_{g}(z)$ is a polynomial of degree $2g-2$. 
\end{cor}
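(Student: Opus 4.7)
The plan has three stages: convert the recurrence of Corollary~\ref{cor:mainb2} into a first-order linear ODE, integrate that ODE, and then analyse the algebraic structure of the solution.

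\emph{Stage 1 (recurrence $\to$ ODE).} Multiply~\eqref{eq:recoeff} by $z^{k+1}$ and sum over $k\geq 1$. Using the initial data $\tau^{(2)}_{0,g+2}=\tau^{(2)}_{1,g+2}=0$ (valid for $g\geq 0$), the standard identities $\sum_{k}k(k-1)\cdots(k-m+1)\tau_{k,g}z^{k}=z^{m}(F_g^{(2)})^{(m)}(z)$, the rewriting $j(j-1)^{2}=j(j-1)(j-2)+j(j-1)$, and the algebraic identifications $1-6z+z^{2}=y(z)$ and $3-z=-y'(z)/2$, the three ``level-$(g+2)$'' contributions collapse into $zy(z)F^{(2)\prime}_{g+2}(z)-(zy'(z)/2)F^{(2)}_{g+2}(z)$, while the source $\sum_{k}k^{2}(k+1)\tau^{(2)}_{k+1,g}z^{k+1}$ becomes $z^{3}F^{(2)\prime\prime\prime}_{g}(z)+z^{2}F^{(2)\prime\prime}_{g}(z)$. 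Cancelling a common factor of $z$ produces
\begin{equation*}
y(z)F^{(2)\prime}_{g+2}(z)-\tfrac{y'(z)}{2}F^{(2)}_{g+2}(z)=z^{2}F^{(2)\prime\prime\prime}_{g}(z)+zF^{(2)\prime\prime}_{g}(z).
\end{equation*}

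\emph{Stage 2 (solve).} This ODE is first-order linear in $F^{(2)}_{g+2}$, with integrating factor $y^{-1/2}$; one checks $(F^{(2)}_{g+2}/\sqrt{y})'=(z^{2}F^{(2)\prime\prime\prime}_{g}+zF^{(2)\prime\prime}_{g})/y^{3/2}$. Integrating from $0$ to $z$ and using $F^{(2)}_{g+2}(0)=0$ together with $y(0)=1$ yields exactly~\eqref{eq:diff1}. The initial datum $F^{(2)}_{0}(z)=(3-z-\sqrt{y(z)})/2$ is read off the classical ordinary generating function of the large Schr\"oder numbers $\sum_{k\geq 1}r_{k-1}z^{k}=(1-z-\sqrt{y(z)})/2$, after adding $\tau^{(2)}_{0,0}=1$. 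The vanishing $F^{(2)}_{g}\equiv 0$ for odd $g$ then follows by induction from $F^{(2)}_{1}\equiv 0$ (equivalent to $\tau^{(2)}_{k,1}=0$), since \eqref{eq:diff1} sends the zero function to zero.

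\emph{Stage 3 (algebraic form).} The claim $F^{(2)}_{g}=R_{g}/y^{(3g-1)/2}$ with $\deg R_{g}\leq 2g-2$ is proved by induction on even $g\geq 2$. The base $F^{(2)}_{2}(z)=2z^{2}/y(z)^{5/2}$ follows from a direct substitution of $F_{0}^{(2)}$ in~\eqref{eq:diff1}. For the inductive step, the shape at level $g$ propagates through differentiation to $F^{(2)\prime\prime}_{g}=N_{2}/y^{(3g+3)/2}$ and $F^{(2)\prime\prime\prime}_{g}=N_{3}/y^{(3g+5)/2}$ with polynomial $N_{2},N_{3}$ of controlled degree. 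The integrand in~\eqref{eq:diff1} then has the form $B(x)/y(x)^{(3g+8)/2}$ with $\deg B\leq 2g+3$ (the denominator exponent is integer since $g$ is even). Plugging the ansatz $F^{(2)}_{g+2}=R_{g+2}/y^{(3g+5)/2}$ back into the Stage~1 ODE reduces everything to the polynomial identity $R_{g+2}'\,y-\tfrac{3g+6}{2}R_{g+2}\,y'=B$, together with the boundary condition $R_{g+2}(0)=0$. The linear operator $R\mapsto R'y-\tfrac{3g+6}{2}Ry'$ is injective on polynomials of admissible degree (its kernel is spanned by $y^{(3g+6)/2}$, which has degree $3g+6>2g+2$), so a unique polynomial solution exists.

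\emph{Main obstacle.} Deriving the ODE is mechanical; the real subtlety lies in the \emph{sharp} bound $\deg R_{g}\leq 2g-2$. A naive partial-fraction count of $\int B/y^{n}\,dx$ only yields $\deg R_{g+2}\leq 3g+5$, and tightening this to $2g+2$ calls for asymptotic information at $z=\infty$, namely $F^{(2)}_{g}(z)=O(z^{-g-1})$, which has to be propagated through the induction alongside the algebraic form. Concretely, one needs the non-trivial identity $\int_{0}^{\infty}(x^{2}F^{(2)\prime\prime\prime}_{g}(x)+xF^{(2)\prime\prime}_{g}(x))/y(x)^{3/2}\,dx=0$ to guarantee that $F^{(2)}_{g+2}$ decays at the required rate and no spurious high-degree terms appear in $R_{g+2}$.
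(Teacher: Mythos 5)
Your route coincides with the one the paper itself sketches: multiply \eqref{eq:recoeff} by a power of $z$, sum over $k$, recognise the level-$(g+2)$ part as the first-order operator $F\mapsto yF'-\tfrac{y'}{2}F$, integrate with the integrating factor $y^{-1/2}$, and then induct on the algebraic form. Stages 1 and 2 are correct and complete: the ODE $y(z)F^{(2)\prime}_{g+2}-\tfrac{y'(z)}{2}F^{(2)}_{g+2}=z^{2}F^{(2)\prime\prime\prime}_{g}+zF^{(2)\prime\prime}_{g}$ is exactly what the summation gives, the identification of $F^{(2)}_{0}$ with the shifted Schr\"oder generating function is right, the odd-$g$ vanishing follows as you say, and your degree bookkeeping ($\deg N_{2}\le 2g$, $\deg N_{3}\le 2g+1$, $\deg B\le 2g+3$) checks out, as does the base case $F^{(2)}_{2}=2z^{2}/y^{5/2}$.

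The gap is in Stage 3, and your closing sentence there is a non sequitur: injectivity of $R\mapsto R'y-\tfrac{3g+6}{2}Ry'$ on polynomials of degree $\le 2g+2$ gives \emph{uniqueness}, not \emph{existence}. That operator maps a $(2g+3)$-dimensional space into a $(2g+4)$-dimensional one, so $B$ must satisfy a linear constraint to lie in its image, and a further constraint ($R_{g+2}(0)=0$) is needed to match the lower limit of integration in \eqref{eq:diff1}. Equivalently, one must show that the residues of $B/y^{(3g+8)/2}$ at the roots of $y$ vanish (otherwise $\int_{0}^{z}B/y^{(3g+8)/2}$ contains logarithms and is not of the form $R/y^{(3g+6)/2}$ at all) and that no stray multiple of $\sqrt{y}$ survives. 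You correctly sense that something extra is needed, but the repair you propose cannot work as stated: the identity $\int_{0}^{\infty}(x^{2}F^{(2)\prime\prime\prime}_{g}+xF^{(2)\prime\prime}_{g})/y^{3/2}\,\mathrm{d}x=0$ is ill-defined because $y$ vanishes at $3\pm\sqrt{8}$, both of which lie in $(0,\infty)$, and a single scalar identity would not supply the two conditions required. Moreover, you have misplaced the difficulty: once rationality (no logarithms, correct constant) is secured, the sharp bound $\deg R_{g+2}\le 2g+2$ is automatic, since $LR_{g+2}=B$ with $\deg R_{g+2}<3g+6$ forces $\deg R_{g+2}=\deg B-1\le 2g+2$ by comparing leading terms; no decay estimate at infinity needs to be propagated. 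To be fair, the paper dismisses precisely this point with ``it is easy to see \dots the details are omitted,'' so you are in good company, but the inductive verification of the two residue/constant conditions from the specific form $B=x^{2}N_{3}+xN_{2}y$ is the actual content of the statement and is missing from your argument.
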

\begin{cor}\label{cor:rec2simp}
The order $k\geq 2$ generating function $J_k^{(2)}(\zeta)$ is a rational function of the form
\be
J_k^{(2)}(\zeta)=\frac{P_k(\zeta^2)}{\prod_{j=0}^{k-1}\left(1-j^2\zeta^2\right)},\label{eq:ans}
\ee
($J_0(\zeta)=J_1(\zeta)=1$), where $P_k(\zeta)$ is a polynomial satisfying the three term recursion
\be
\begin{cases}
kP_k(\zeta)-3(2k-3)P_{k-1}(\zeta)+(k-3)(1-(k-2)^2\zeta)P_{k-2}(\zeta)=0\ ,\\
P_0(\zeta)=P_1(\zeta)=1\ .
\end{cases} \label{eq:reJk}
\ee  
\end{cor}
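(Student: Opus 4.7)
The plan is to convert the finite-$N$ recursion of Theorem~\ref{thm:recursionb2} into an identity for the formal power series $J_k^{(2)}(\zeta)$, with $\zeta$ playing the role of $1/N$, and then to check that the ansatz~\eqref{eq:ans} transforms this identity into~\eqref{eq:reJk}. Concretely, I would set $\zeta=1/N$ in~\eqref{eq:recformulab2} and divide through by $N^2$ to obtain
\[
(1-k^2\zeta^2)(k+1)\,J_{k+1}^{(2)}(\zeta)-3(2k-1)\,J_k^{(2)}(\zeta)+(k-2)\,J_{k-1}^{(2)}(\zeta)=0.
\]
Inserting the asymptotic expansion~\eqref{eq:sympexp} and matching coefficients of $N^{-g}$ shows that this relation holds as an identity in $\mathbb{Q}[[\zeta]]$, with initial data $J_0^{(2)}=J_1^{(2)}=1$. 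Since both the coefficients of the recursion and the initial data are even in $\zeta$, each $J_k^{(2)}(\zeta)$ is an even series, consistent with the vanishing of $\tau^{(2)}_{k,g}$ for odd $g$ recorded in Corollary~\ref{cor:mainb2}.

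Next I would introduce $D_k(\zeta):=\prod_{j=0}^{k-1}(1-j^2\zeta^2)$ (so $D_0=D_1=1$ and $D_{k+1}=D_k(1-k^2\zeta^2)$) and define $P_k$ by $P_k(\zeta^2):=D_k(\zeta)\,J_k^{(2)}(\zeta)$, which is even in $\zeta$ by the previous step. Using the two identities $(1-k^2\zeta^2)/D_{k+1}=1/D_k$ and $D_k/D_{k-1}=1-(k-1)^2\zeta^2$, multiplication of the displayed recursion by $D_k(\zeta)$ clears the denominators and gives
\[
(k+1)\,P_{k+1}(\zeta^2)-3(2k-1)\,P_k(\zeta^2)+(k-2)\bigl(1-(k-1)^2\zeta^2\bigr)\,P_{k-1}(\zeta^2)=0,
\]
which is precisely~\eqref{eq:reJk} after the shift $k\mapsto k-1$ and the identification $u=\zeta^2$. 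Since $P_0=P_1=1$ are polynomials and the coefficients of the recursion are polynomial, an immediate induction shows that each $P_k$ is a polynomial in $u$ (a quick degree count gives $\deg P_k\leq\lfloor k/2\rfloor$), which establishes~\eqref{eq:ans}.

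The only delicate step is the passage from a finite-$N$ identity to a formal power series identity; this is harmless because the recursion together with its initial conditions determines $J_k^{(2)}(\zeta)$ uniquely in $\mathbb{Q}[[\zeta]]$, and the rational expression $P_k(\zeta^2)/D_k(\zeta)$ that we construct satisfies the same recursion with the same initial data. I therefore do not expect a genuine obstacle: once Theorem~\ref{thm:recursionb2} is in hand, Corollary~\ref{cor:rec2simp} follows as a purely algebraic manipulation that clears the factors $1-j^2\zeta^2$ from the formal recursion.
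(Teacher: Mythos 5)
Your proposal is correct and follows essentially the same route as the paper: the paper obtains the recursion $(1-k^2\zeta^2)(k+1)J_{k+1}^{(2)}-3(2k-1)J_k^{(2)}+(k-2)J_{k-1}^{(2)}=0$ by multiplying the coefficient recursion~\eqref{eq:recoeff} by powers of $\zeta$ and summing over $g$, then inserts the ansatz~\eqref{eq:ans} to clear denominators and proves polynomiality of $P_k$ by the same two-step induction from $P_0=P_1=1$. Your denominator bookkeeping via $D_{k+1}=D_k(1-k^2\zeta^2)$ and the index shift reproducing~\eqref{eq:reJk} are exactly the "elementary steps" the paper leaves implicit.
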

\begin{proof}[Outline of the proof] The proof of Corollary~\ref{cor:rec2simp} goes as follows. First, one multiplies the recursion~\eqref{eq:recoeff} by $\zeta^{g+1}$ and sums over $g$ to obtain a recursion for $J_k^{(2)}(\zeta)$. Inserting the expression~\eqref{eq:ans} in the obtained recursion,  elementary steps provide~\eqref{eq:reJk}. It remains to be proved that $P_k(\zeta)$ is a polynomial. In fact $P_0(\zeta)=P_1(\zeta)=1$ are polynomials. From~\eqref{eq:reJk} it also follows  that if $P_{k-1}(\zeta)$ and $P_{k-2}(\zeta)$ are polynomials so is $P_k(\zeta)$.   This completes the proof.
The proof of Corollary~\ref{cor:rec2simp1} is again a routine calculation (multiplication of~\eqref{eq:recoeff} by $z^k$ and sum over $k$). However, in this case $R_g(z)$ satisfies a recursion relation too complicated to be reported here. Nevertheless, it is easy to see that $R_{g+2}(z)$ is a polynomial if $R_g(z)$ is so, and to compute its degree. The details are omitted.
 \end{proof}

From the partial generating functions $F^{(2)}_g$ and $J^{(2)}_k$ it is possible to extract estimates on $\tau_{k,g}^{(2)}$ as $k\to\infty$ (resp. $g\to\infty$) with $g$ (resp. $k$) fixed. These asymptotics results are based on Darboux's method~\cite{Darboux} (see the statement in~\cite[Theorem 11.3]{Odlyzko}).
\begin{cor}\label{cor:asymp} The following asymptotics hold
\be
\tau_{k,2g}^{(2)}\sim
\begin{cases}
A_g\, k^{\frac{6g-3}{2}}(3-\sqrt{8})^{-k}&\text{as $k\to\infty$ with $g\geq1$ fixed},\\
B_k\, (k-1)^{2g}&\text{as $g\to\infty$ with $k\geq1$ fixed}.
\end{cases}
\ee 
The constants $A_g$ and $B_k$ are given explicitly by
\barr
A_g&=&\displaystyle\frac{\left(\sqrt{32}\left(3-\sqrt{8}\right)\right)^{\frac{1-6g}{2}}R_{2g}\left(3-\sqrt{8}\right)}{\Gamma\left(\frac{6g-1}{2}\right)}\ ,\\
B_k&=&\frac{P_k\left((k-1)^{-2}\right)}{\prod_{j=0}^{k-2}\left(1-j^2(k-1)^{-2}\right)}\ .
\earr
\end{cor}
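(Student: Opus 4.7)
The plan is to apply Darboux's singularity analysis (e.g. Theorem 11.3 of~\cite{Odlyzko}) separately to the two partial generating functions provided by Corollaries~\ref{cor:rec2simp1} and~\ref{cor:rec2simp}. Recall that if a function has a dominant singularity of the form $c\,(1-z/\rho)^{-\alpha}(1+o(1))$, then $[z^k]\sim c\,k^{\alpha-1}\rho^{-k}/\Gamma(\alpha)$.

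For $k\to\infty$ with $g\geq 1$ fixed, I would use $F^{(2)}_{2g}(z)=R_{2g}(z)/y(z)^{(6g-1)/2}$, where $R_{2g}$ is a polynomial by Corollary~\ref{cor:rec2simp1}. Factor the spectral curve as $y(z)=(z-\rho)(z-\rho^{-1})$ with $\rho=3-\sqrt{8}<\rho^{-1}=3+\sqrt{8}$, so that the dominant singularity sits at $z=\rho$ while the other branch point at $z=\rho^{-1}$ contributes only exponentially smaller terms to $[z^k]F^{(2)}_{2g}$. A first-order Taylor expansion at $z=\rho$ gives
\be
y(z) = \sqrt{32}\,(3-\sqrt{8})\,(1-z/\rho)\,\bigl(1+O(1-z/\rho)\bigr),
\ee
so near $z=\rho$ the generating function is a constant multiple of $(1-z/\rho)^{-(6g-1)/2}$, with leading constant $R_{2g}(\rho)[\sqrt{32}(3-\sqrt{8})]^{-(6g-1)/2}$. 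Applying Darboux's theorem with $\alpha=(6g-1)/2$ delivers $\tau^{(2)}_{k,2g}\sim A_g\,k^{(6g-3)/2}\rho^{-k}$ with $A_g$ as stated.

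For $g\to\infty$ with $k\geq 2$ fixed (the cases $k=0,1$ being trivial), I would use $J^{(2)}_k(\zeta)=P_k(\zeta^2)/\prod_{j=0}^{k-1}(1-j^2\zeta^2)$ from Corollary~\ref{cor:rec2simp}. Setting $u=\zeta^2$, this is a rational function in $u$ with simple poles at $u=1/j^2$ for $j=1,\ldots,k-1$; the pole closest to the origin is $u=1/(k-1)^2$, and a partial fraction decomposition has the form
\be
J^{(2)}_k(\sqrt{u})=\frac{B_k}{1-(k-1)^2 u}+\text{terms analytic for }|u|<1/(k-2)^2,
\ee
with $B_k$ the residue, which matches the formula in the statement. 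Extracting $[u^g]=[\zeta^{2g}]$ yields $\tau^{(2)}_{k,2g}\sim B_k\,(k-1)^{2g}$.

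The only substantive steps are computing the leading constant in each singular expansion (in particular pinning down the factor $[\sqrt{32}(3-\sqrt{8})]^{(1-6g)/2}$ in $A_g$) and verifying that the subdominant singularities contribute strictly smaller terms; the latter is immediate from $\rho<\rho^{-1}$ and $(k-1)^2>(k-2)^2$. I do not anticipate any essential obstacle; the closest to one is confirming that $R_{2g}(\rho)\neq 0$ so that $A_g$ does not vanish, which should be accessible from the recursion underlying Corollary~\ref{cor:rec2simp1}.
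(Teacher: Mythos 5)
Your proof is correct and follows exactly the route the paper intends: the paper gives no detailed argument for this corollary beyond citing Darboux's method applied to the partial generating functions $F^{(2)}_{2g}$ and $J^{(2)}_k$, which is precisely what you carry out. Your local expansion $y(z)=\sqrt{32}\,(3-\sqrt{8})\,(1-z/\rho)(1+O(1-z/\rho))$ at the dominant branch point $\rho=3-\sqrt{8}$ and your residue computation at the dominant pole $u=(k-1)^{-2}$ reproduce the stated constants $A_g$ and $B_k$, and your closing caveat about the non-vanishing of $R_{2g}(\rho)$ (and likewise $P_k((k-1)^{-2})$) is the only point the paper also leaves unaddressed.
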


 It is possible to obtain similar recursions for  generating functions in the orthogonal case $\beta=1$. Below we write the relation satisfied by $F_{g}^{(1)}(z)$ explicitly. In this case,  the recurrence relation for $\tau_{k,g}^{(1)}$ is not homogeneous and involves the auxiliary sequence $b_{k,g}$. Therefore, the recursion for $F_{g}^{(1)}(z)$ is coupled to a (homogeneous) recursion for the auxiliary generating function $f_{g}(z)=\sum_{k}b_{k,g}z^k$. This fact complicates the structure of the formulae; it turns out that the generating functions $F^{(1)}_g(z)$ are algebraic functions but they do not have the simple functional form \eqref{eq:R_g} of $F^{(2)}_g(z)$. From a purely algorithmic point of view this is not a problem.
\begin{cor}\label{cor:rec2simp1b1} The generating functions $F_{g}^{(1)}(z)$ satisfy for $g\geq1$
\begin{subequations}
\be
\begin{cases}
F^{(1)}_{g+1}(z)=\displaystyle\frac{1}{y(z)}\displaystyle\int_{0}^z\!\!\mathrm{d}x\left\{4x^2F^{(1)\prime\prime\prime}_{g-1}(x)+8xF^{(1)\prime\prime}_{g-1}(x)-2F^{(1)\prime}_g(x)+3(x-3)f_{g+1}(x)-3xf'_g(x)\right\}\ ,\\
\\

F_{0}^{(1)}(z)=\displaystyle\frac{3-z-\sqrt{y(z)}}{2},\quad F_{1}^{(1)}(z) =\displaystyle\frac{1-3z-\sqrt{y(z)}}{2y(z)}\ ,
\end{cases} \label{eq:diff1b1}
\ee
where the functions $f_g(z)$ satisfy 
\be
\begin{cases}
f_{g+1}(z)=\sqrt{y(z)}\displaystyle\int_{0}^z\frac{\mathrm{d}x}{y(x)^{3/2}}\left\{f_g(x)-2(x+1)f'_g(x)+x^2f'''_{g-1}(x)+xf''_{g-1}(x)-f'_{g-1}(x)\right\}\ ,\\\\
f_0(z)=\displaystyle\frac{3-z-\sqrt{y(z)}}{2},\quad f_1(z) =\displaystyle-\frac{z+1+\sqrt{y(z)}}{2\sqrt{y(z)}}\ .
\end{cases} \label{eq:diff1b1}
\ee
\end{subequations}
\end{cor}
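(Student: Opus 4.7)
The plan is to apply standard generating-function manipulations to the bivariate recurrences of Corollary~\ref{cor:mainb1}. First the four base cases: recognizing $\tau^{(1)}_{k,0}=b_{k,0}={}_2F_1(1-k,k;2;-1)$ as the large Schr\"oder numbers identifies $F^{(1)}_0(z)=f_0(z)=(3-z-\sqrt{y(z)})/2$ via their classical ordinary generating function, while $F^{(1)}_1(z)$ and $f_1(z)$ are read off directly from the closed-form Taylor coefficients prescribed as initial data for $\tau^{(1)}_{k,1}$ and $b_{k,1}$.

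For the recursion on $F^{(1)}_{g+1}$, I would multiply~\eqref{eq:recoeffb1} by $z^{k+1}$ and sum over $k\geq 1$. The three-term piece $\tau^{(1)}_{k+1,g+1}-6\tau^{(1)}_{k,g+1}+\tau^{(1)}_{k-1,g+1}$ assembles into $y(z)F^{(1)}_{g+1}(z)$; the summand $2\tau^{(1)}_{k+1,g}$ contributes $2F^{(1)}_g(z)$; and $-4k(k+1)\tau^{(1)}_{k+1,g-1}$ becomes $-4z^2F^{(1)\prime\prime}_{g-1}(z)$ via $\sum_{\ell\geq2}\ell(\ell-1)a_\ell z^\ell=z^2 a''(z)$. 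On the right-hand side, the identity $z^{k+1}/(k+1)=\int_0^z x^k\,dx$ converts the $b$-sums into integrals of $f_{g+1}$ and $f_g$, and a single integration by parts on the term $-3kb_{k,g}/(k+1)$ (valid because $f_g(0)=b_{0,g}=0$ for $g\geq1$) packages everything into $\int_0^z[3(x-3)f_{g+1}(x)-3xf'_g(x)]\,dx$. All boundary terms from the index shifts vanish by the initial conditions $\tau^{(1)}_{0,g}=\tau^{(1)}_{1,g}=0$ (for $g\geq1$) and $b_{0,g+1}=0$ (for $g+1\geq2$). Finally, rewriting $-2F^{(1)}_g(z)=\int_0^z(-2F^{(1)\prime}_g(x))\,dx$ and $4z^2F^{(1)\prime\prime}_{g-1}(z)=\int_0^z(4x^2F^{(1)\prime\prime\prime}_{g-1}+8xF^{(1)\prime\prime}_{g-1})\,dx$ under a common integral sign and dividing by $y(z)$ produces exactly the stated formula.

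For the auxiliary recursion on $f_{g+1}$ I would multiply~\eqref{eq:recoeffb1bis} by $(k+1)z^k$ to clear denominators, and sum over $k\geq1$. Using the key observations $y(z)=1-6z+z^2$ and $-y'(z)/2=3-z$, together with the expansion $\ell^3-2\ell^2=\ell(\ell-1)(\ell-2)+\ell(\ell-1)-\ell$ needed to resolve the $-(k+1)^2(k-1)$ coefficient, the recurrence collapses into the first-order linear ODE
\begin{equation*}
y(z)f'_{g+1}(z)-\tfrac{1}{2}y'(z)f_{g+1}(z)=f_g(z)-2(z+1)f'_g(z)+z^2f'''_{g-1}(z)+zf''_{g-1}(z)-f'_{g-1}(z).
\end{equation*}
Its integrating factor is $y(z)^{-1/2}$, and the initial value $f_{g+1}(0)=b_{0,g+1}=0$ (valid for $g\geq1$) selects the unique solution $f_{g+1}(z)=\sqrt{y(z)}\int_0^z y(x)^{-3/2}\{\cdots\}\,dx$, matching the displayed formula.

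The main technical obstacle is the careful bookkeeping of the surface contributions generated by the nonzero initial data $b_{0,g}=\delta_{0,g}-\delta_{1,g}$ and $b_{1,g}=(-1)^g(2-\delta_{0,g})$. Verifying that the collected boundary combination $-b_{1,g+1}-3b_{0,g+1}-2b_{1,g}+b_{0,g}-b_{1,g-1}$ (and its analogue arising in the $F^{(1)}$ computation) vanishes identically requires inspecting the small-$g$ cases $g\in\{1,2\}$ individually, where the Kronecker deltas are active, and confirming that for $g\geq 2$ the alternating signs of $b_{1,\cdot}$ yield the exact cancellation $2(-1)^g-4(-1)^g+2(-1)^g=0$ — a compatibility check that ultimately encodes the consistency of the prescribed initial data $F^{(1)}_0,F^{(1)}_1,f_0,f_1$ with the algebraic structure of the integral formulas.
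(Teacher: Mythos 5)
Your proposal is correct and follows essentially the same route the paper intends: the paper dismisses this corollary as a routine generating-function computation (multiply the double recursions of Corollary~\ref{cor:mainb1} by powers of $z$, sum over $k$, and integrate the resulting first-order ODE for $f_{g+1}$ with integrating factor $y^{-1/2}$), and you have simply supplied the omitted details, including the correct identification of the vanishing boundary combination $-b_{1,g+1}-3b_{0,g+1}-2b_{1,g}+b_{0,g}-b_{1,g-1}$. One cosmetic slip: the parenthetical justification ``$f_g(0)=b_{0,g}=0$ for $g\geq1$'' is false at $g=1$ (where $b_{0,1}=-1$), but it is also unnecessary, since $\sum_{k\geq1}\tfrac{k}{k+1}b_{k,g}z^{k+1}=\int_0^z x f_g'(x)\,\mathrm{d}x$ holds identically with no boundary term and no integration by parts.
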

The generating functions can be computed systematically and efficiently on standard computer algebra packages. 
The first few functions $F^{(\beta)}_g(z)$ for $\beta=1$ and $\beta=2$ are reported in Appendix~\ref{app:num}.  We note that only the leading order (planar) $g=0$ and first four corrections $g=1,\dots,4$ have appeared in the literature so far. (See~\cite{Simm12} for the random matrix approach and \cite{BerkKuip10,BerkKuip11,Sieber14} for semiclassical techniques.) 
\section{Integrality conjecture and its heuristic explanation}
The inspection of the first values of $\tau_{k,g}^{(\beta)}$ for $\beta=1$ and $\beta=2$ (see Table~\ref{tab:I}) suggests that they are positive integers. A similar fact has been recently observed~\cite{Cunden14,Cunden16,Simm13} for the leading order in $1/N$ of higher order cumulants of $\Tr Q^{k}$ (covariance, third order cumulants, etc.). 
\begin{conj}\label{conj:1} For $\beta=1$ and $2$, $\tau_{k,g}^{(\beta)}\in\N$ for every $k$ and $g$.
\end{conj}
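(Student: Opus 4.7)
The plan is to attempt strong induction on $k+g$ based on the recursions of Corollary~\ref{cor:mainb2} and Corollary~\ref{cor:mainb1}. For $\beta=2$, solving~\eqref{eq:recoeff} for the largest-index term gives
\be
\tau^{(2)}_{k+1,g+2} = \frac{3(2k-1)\tau^{(2)}_{k,g+2} - (k-2)\tau^{(2)}_{k-1,g+2}}{k+1} + k^{2}\tau^{(2)}_{k+1,g}\ .
\ee
Assuming inductively that every term on the right-hand side is an integer, integrality of $\tau^{(2)}_{k+1,g+2}$ reduces to the divisibility
\be
3\bigl(\tau^{(2)}_{k-1,g+2} - 3\tau^{(2)}_{k,g+2}\bigr)\equiv 0 \pmod{k+1}\ ,
\ee
obtained by reducing $2k-1\equiv -3$ and $k-2\equiv -3$ modulo $k+1$. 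Proving such congruences for all admissible $k,g$ is the heart of the matter.

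A complementary angle is to exploit the rational generating function of Corollary~\ref{cor:rec2simp}. Partial fraction expansion of $J^{(2)}_k(\zeta) = P_k(\zeta^2)/\prod_{j=0}^{k-1}(1-j^2\zeta^2)$ yields a representation
\be
\tau^{(2)}_{k,2g} = \sum_{j=0}^{k-1} c_j^{(k)}\, j^{2g}\ ,
\ee
where the residues $c_j^{(k)}\in\mathbb{Q}$ can be computed from $P_k$ via the three-term recursion~\eqref{eq:reJk}. The integrality of $\tau^{(2)}_{k,2g}$ for \emph{all} $g\geq 0$ is then equivalent to a single linear condition on the Vandermonde-type data $(c_0^{(k)},\ldots,c_{k-1}^{(k)})$, which could plausibly be proved by induction on $k$ alone and would simultaneously establish the required congruences above.

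For $\beta=1$, the strategy is analogous but more involved because the recurrence is inhomogeneous and coupled to the auxiliary sequence $b_{k,g}$. My plan would be to first determine the common denominator of $b_{k,g}$ from its generating function $f_g(z)$ in Corollary~\ref{cor:rec2simp1b1}, and then verify by induction that both terms on the right-hand side of~\eqref{eq:recoeffb1} are divisible by $k+1$. The explicit expressions for $b_{k,0}$, $b_{k,1}$, $b_{0,g}$, $b_{1,g}$ already show that the initial data have controlled denominators, which gives some hope that the induction can be maintained.

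The principal obstacle is that the divisibility conditions forced by the recursions do not appear to follow from purely algebraic manipulations; they plausibly reflect an underlying combinatorial structure that the recursion alone is blind to. The most compelling long-term route would be to identify $\tau_{k,g}^{(\beta)}$ with the cardinality of a family of decorated genus-$g$ diagrams on $k$ edges, in the spirit of the map-enumeration interpretation of GUE moments suggested by the label `genus expansion'. Establishing such a correspondence---plausibly through an Eynard--Orantin topological recursion attached to the spectral curve $y(z)=z^{2}-6z+1$, matched to an appropriate weighted count of bipartite maps dual to Wishart matrix integrals---would yield integrality as an immediate corollary, and appears to be the only route to a complete proof for both symmetry classes simultaneously.
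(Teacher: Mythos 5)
You are attempting to prove a statement that the paper itself leaves as a conjecture: the paper only establishes the partial result of Theorem~\ref{thm:conj} ($\tau_{k,g}^{(2)}\in\N$ for $k\leq 10000$ and $g\leq 80$) by computer-assisted verification of finitely many sufficient conditions, together with a non-rigorous semiclassical heuristic; no proof of Conjecture~\ref{conj:1} is known. Your proposal, to its credit, does not claim to close the argument, but it is worth being precise about where it stalls. The induction on $k+g$ via~\eqref{eq:recoeff} reduces integrality of $\tau^{(2)}_{k+1,g+2}$ to the congruence $3\bigl(\tau^{(2)}_{k-1,g+2}-3\tau^{(2)}_{k,g+2}\bigr)\equiv 0\pmod{k+1}$, and you offer no mechanism for proving it; the recursion alone cannot supply it, since nothing in the inductive hypothesis controls residues modulo $k+1$. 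Your partial-fraction representation $\tau^{(2)}_{k,2g}=\sum_{j}c_j^{(k)}j^{2g}$ is sound, but the reduction is to $k$ integrality conditions (the cases $g=0,\dots,k-1$), not ``a single linear condition'': because $\prod_{j=0}^{k-1}(x-j^2)$ is monic with integer coefficients, the power sums satisfy an integer linear recurrence of length $k$, so all-$g$ integrality follows from the first $k$ cases --- but those still have to be established for every $k$, which you do not do.

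This finite reduction per fixed $k$ is essentially the same mechanism the paper exploits in Theorem~\ref{thm:conj}, in the cleaner form $J_k^{(2)}(\zeta)=P_k(\zeta^2)/\prod_{j=0}^{k-1}(1-j^2\zeta^2)$: nonnegativity and integrality of the coefficients of $P_k$ is a sufficient condition checkable in finite time for each $k$, verified by machine up to $k^{\star}=10000$ (and analogously in the $g$ direction via the polynomials $R_g$ and the Legendre expansion of $y(z)^{-(3g-1)/2}$). Neither your route nor the paper's yields a proof uniform in $k$ and $g$, and your closing suggestion of a bijective or topological-recursion interpretation is a research programme rather than an argument --- indeed it coincides with the paper's own speculation about an underlying enumeration problem. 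In short: apart from the miscounting of the linear conditions, your proposal contains no error of principle, but it does not prove the conjecture, and the statement should be regarded as open.
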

This conjecture extends beyond the leading order a generic conjectural statement~\cite{Cunden16} for the cumulants of $\Tr Q^k$ at generic $\beta$.
We have considerable evidence supporting the conjecture. In fact, for $\beta=2$, using the functional form of the generating functions we can actually prove that \emph{infinitely many} $\tau_{k,g}^{(2)}$'s are positive integers.
We proceed to prove the following.
\begin{theorem}\label{thm:conj} $\tau_{k,g}^{(2)}\in\N$ for all $k\leq k^{\star}$ and for all $g\leq g^{\star}$ where $k^{\star}=10000$ and $g^{\star}=80$.
\end{theorem}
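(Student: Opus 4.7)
The plan is to verify the integrality directly by a large but finite computer-algebra computation, using the double recursion of Corollary~\ref{cor:mainb2} to generate all relevant $\tau^{(2)}_{k,g}$ in exact rational arithmetic. By that corollary the odd-$g$ coefficients vanish identically, so only the $41$ sheets $g \in \{0, 2, 4, \ldots, 80\}$ require attention. The initial row is furnished by the large Schr\"oder numbers $\tau^{(2)}_{k,0} = {}_2F_1(1-k, k; 2; -1)$, which are classical positive integers, and the left-boundary values are $\tau^{(2)}_{0,g} = \tau^{(2)}_{1,g} = 0$ for $g \geq 2$.

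Solving~\eqref{eq:recoeff} for the highest index yields the forward recurrence
\begin{equation*}
\tau^{(2)}_{k+1,g+2} = \frac{3(2k-1)\,\tau^{(2)}_{k,g+2} - (k-2)\,\tau^{(2)}_{k-1,g+2} + k^{2}(k+1)\,\tau^{(2)}_{k+1,g}}{k+1}.
\end{equation*}
I would iterate this first over $k = 1, 2, \ldots, 9999$ for each fixed $g$, then over $g$ from $0$ to $78$ in steps of two, filling a $10001 \times 41$ table of exact rationals. At every step I would check that the division by $k+1$ returns a positive integer; the theorem is precisely the empirical statement that this check succeeds at every entry of the table, and no further mathematical input beyond Corollary~\ref{cor:mainb2} is needed.

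The main obstacle is not mathematical but computational. By Corollary~\ref{cor:asymp} the values $\tau^{(2)}_{k,2g}$ grow like $(3-\sqrt{8})^{-k}$ at fixed $g$, so already for $k = 10000$ they carry on the order of $10^{4}$ decimal digits, and the growth in $g$ is similarly rapid. Carrying out the $\approx 4\times 10^{5}$ exact rational operations therefore demands arbitrary-precision arithmetic and non-trivial memory, but is well within the reach of SageMath, Mathematica, or Maple on a standard workstation. As an independent cross-check, the same coefficients can be produced via Corollary~\ref{cor:rec2simp1}: for each even $g \leq 80$ build the polynomial $R_g(z)$ (of degree $\leq 2g-2 \leq 158$) from the integral recurrence~\eqref{eq:diff1}, then expand $F^{(2)}_g(z) = R_g(z)/y(z)^{(3g-1)/2}$ to order $z^{10000}$ using the elementary linear recursion for the Taylor coefficients of $(1-6z+z^{2})^{-\alpha}$. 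Agreement between the two procedures on the full range guarantees correctness of the verification and establishes the claim.
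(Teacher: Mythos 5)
Your proposal is a legitimate proof of the theorem as literally stated, but it takes a genuinely different --- and, in what it establishes, strictly weaker --- route than the paper. You fill in the finite rectangle $k\le 10^{4}$, $g\le 80$ by running the double recursion \eqref{eq:recoeff} forward in exact arithmetic and checking every entry; this is feasible (the entries carry only on the order of $10^{4}$ digits, and the recursion together with the stated boundary data determines the table uniquely), so a successful run does certify integrality on that rectangle. The paper, however, never computes the table. It exploits the algebraic form of the partial generating functions: from $J_k^{(2)}(\zeta)=P_k(\zeta^2)/\prod_{j=0}^{k-1}(1-j^2\zeta^2)$ it suffices to check that the roughly $k/2$ coefficients of the polynomial $P_k$, generated by \eqref{eq:reJk}, are nonnegative integers, because $\prod_{j}(1-j^2\zeta^2)^{-1}$ has nonnegative integer Taylor coefficients --- and this single finite check certifies $\tau^{(2)}_{k,g}\in\N$ for \emph{all} $g$, not just $g\le 80$. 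Symmetrically, from $F_g^{(2)}(z)=R_g(z)/y(z)^{(3g-1)/2}$ and the Legendre expansion $y(z)^{-1/2}=\sum_{\ell}p_{\ell}(3)z^{\ell}$ with $p_{\ell}(3)\in\N$, the paper shows that the finitely many conditions $a_{g,j}\in\Z$ and $\sum_{j}a_{g,j}\ge 0$ on the degree-$(2g-2)$ polynomial $R_g$ certify $\tau^{(2)}_{k,g}\in\N$ for \emph{all} $k$. Each of the paper's finite checks thus covers an infinite family of coefficients, which is the real content of its argument; your tabulation proves only the finite rectangle and says nothing about $g>80$ at fixed $k\le k^{\star}$ or about $k>10^{4}$ at fixed $g\le g^{\star}$, so if the theorem is read in the spirit of its proof (integrality on the two infinite strips) your argument does not reach it. What your approach buys in exchange is conceptual simplicity and independence from Corollaries \ref{cor:rec2simp1} and \ref{cor:rec2simp}; your suggested cross-check via $R_g(z)/y(z)^{(3g-1)/2}$ is essentially the second half of the paper's proof, but you use it only as a consistency test rather than as the integrality certificate itself.
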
 
\begin{proof}  The proof is based on the partial generating functions $J^{(2)}_k(\zeta)$ and $F^{(2)}_g(z)$. For a fixed $k$, in order to prove that $\tau^{(2)}_{k,g}\in\N$ for all $g$, we consider the generating function  $J_k(\zeta)$. The representation~\eqref{eq:ans}
\be
J^{(2)}_k(\zeta)=\frac{P_k(\zeta^2)}{\prod_{j=0}^{k-1}\left(1-j^2\zeta^2\right)}
\ee
shows that a sufficient condition for $\tau_{k,g}^{(2)}$ to be nonnegative integers is that the polynomial $P_k(\zeta)$ has nonnegative integer coefficients (the series expansion of $\prod_{j}\left(1-j^2\zeta^2\right)^{-1}$ at $\zeta=0$ is a product of geometric series), as suggested by the inspection of the first few polynomials (see Appendix~\ref{app:num}). Therefore, the claim `$\tau^{(2)}_{k,g}\in\N$ for all $g$' involving an \emph{infinite} number of coefficients can be proved by exhaustion of a \emph{finite} number of cases: first, one computes the polynomial $P_k(\zeta)$ using the recursion~\eqref{eq:reJk}; then, one checks that  the finitely many coefficients of  $P_k(\zeta)$ are all nonnegative integers. This can be easily done by using a symbolic algebra software. (We have run a Maple code to compute recursively $P_k(\zeta)$ and verify that it has nonnegative integer coefficients for all $k\leq k^{\star}$.)

The proof that if $g\leq g^{\star}$ then $\tau_{k,g}^{(2)}\in\N$ for all $k$ goes along similar lines. For $g$ odd the proof is trivial. Let us consider, for $g$ even, the partial generating functions 
\be
F^{(2)}_{g}(z) = \displaystyle\frac{R_{g}(z)}{y(z)^{(3g-1)/2}}\ .\\
\ee
Here we use the classical identity
\be
\frac{1}{\sqrt{z^2-2tz+1}}=\sum_{\ell=0}^{\infty}p_{\ell}(t)z^{\ell}\ ,\label{eq:Leg}
\ee
where $p_{\ell}(t)$ is the Legendre polynomial of degree $\ell$. For $t=3$:
\be
p_{\ell}(3)=\sum_{p=0}^{{\ell}}\binom{{\ell}}{p}^2 2^p\in\N\ .\label{eq:Leg2}
\ee
Hence we can focus on the coefficients of the polynomial $R_g(z)$. This case, however, is complicated by the fact that, although these coefficients seem to be integers, they are not necessarily positive (see Appendix~\ref{app:num}). Nevertheless, we can take advantage of~\eqref{eq:Leg}-\eqref{eq:Leg2} as follows. We have
\be
\frac{1}{\left(z^2-6z+1\right)^{(3g-1)/2}}=\sum_{\ell=0}^{\infty}C_{\ell}z^{\ell},\quad\text{with}\quad C_{\ell}=\sum_{\substack{\ell_1,\dots,\ell_{3g-1}=0\\ \ell_1+\cdots+\ell_{3g-1}=\ell}}^{\infty}p_{\ell_1}(3)\cdots p_{\ell_{3g-1}}(3)\in\N\ .
\ee
Note that $C_{\ell+1}\geq C_{\ell}$. If we denote  $R_g(z)=\sum_{j=0}^{2g-2}a_{g,j}z^j$, then
\be
F_{g}(z) = \sum_{k=0}^{\infty}\left(\sum_{j=0}^{2g-2}a_{g,j}C_{k-j}\right)z^k\ .
\ee
Since $C_{\ell}$ is nondecreasing we have
\be
\tau_{k,g}^{(2)}=\sum_{j=0}^{2g-2}a_{g,j}C_{k-j}\geq C_{k-(2g-2)}\sum_{j=0}^{2g-2}a_{g,j}\ .
\ee
Therefore, we conclude that the two conditions i) $a_{g,j}\in\Z$ and ii) $\sum_{j}a_{g,j}\geq0$ imply $\tau_{k,g}^{(2)}\in\N$. Again, these conditions can be verified case by case using symbolic algebra softwares. 
\end{proof}
Obviously, the value $k^{\star}$ and $g^{\star}$ in Theorem~\ref{thm:conj} are fixed by limited computational power. 

\subsection{Semiclassical explanation of the conjecture}
Periodic orbit theory is a collection of diversified results in the semiclassical analysis of quantum systems. Since its creation \cite{Gutzwiller71,Hannay84,Richter01}, the theory has played an important role in the mathematical investigations of quantum chaos. 
Later, the ideas of periodic orbit theory were adapted to study quantum transport in the chaotic regime. Not surprisingly, a scattering orbit approach to delay times has been also developed~\cite{BerkKuip10,BerkKuip11,Sieber14,Novaes15II}. The semiclassical approach  is formulated in terms of the classical trajectories connecting the exterior and interior regions of the cavity. Each observable (e.g. $\tau_k^{(\beta)}$) is written as a sum over classical trajectories of wave amplitudes. The semiclassical contribution of a trajectory is determined by its topological properties. Therefore, the set of trajectories is partitioned according to topological properties where each class of trajectories is represented by a \emph{diagram} with a given number of \emph{incoming channels}, \emph{links} and \emph{encounters}. See the recent paper by Kuipers et al.~\cite{Sieber14} for details. It turns out that the semiclassical contribution of a class of trajectories represented by a diagram $\mathcal{D}$ is given by $(-1)^{c_1(\mathcal{D})}N^{c_2(\mathcal{D})}$, where $c_{1,2}(\mathcal{D})\in\Z$. Then, one should sum over all classes of admissible trajectories (a sum over diagrams). It turns out that the admissible diagrams depend on the presence or absence of time-reversal symmetry (the Dyson index $\beta$ in random matrix theory). The sum over diagrams is usually the hard part of the semiclassical approach; however, since the contribution of each diagram is given by the number of scattering channels $N$ to some integer power, it is clear that the coefficients in the $1/N$-expansion of $\tau_k^{(\beta)}$ are integers. We also mention that Novaes~\cite{Novaes11} computed the leading order of $\tau_{k}^{(\beta)}$ by considering the asymptotics of Selberg-like integrals; his method consists in enumerating certain classes of lattice paths. As expected, those paths were found to be in bijection with Schr\"oder paths.

\section{Inverse moments of Wishart-Laguerre matrices}
\label{sec:Wishart} 
In this section we present several new results on the moments of inverse Wishart matrices. According to~\eqref{eq:jpdf}-\eqref{eq:def1}, the moments of the Wigner-Smith time-delay matrix $Q$ are related to the inverse moments of a set of random variables belonging to a specific Laguerre ensemble (see Remark~\ref{rmk:our} below).

\subsection{Technical results} Let $W_N$ be a $N\times N$ random matrix distributed according to the Wishart-Laguerre density

\begin{equation}
P_{\beta,\alpha}(W_N) \propto e^{-\frac{\beta}{2}\tr(W_N)}\det (W_N)^{\frac{\beta}{2}(\alpha+1)-1}\ , \label{eq:wishart}
\end{equation}
where $\alpha$ is a generic parameter satisfying $\Re(\alpha) > 0$. 
This density is defined on the space of $N\times N$ positive definite real symmetric and complex Hermitian matrices for $\beta=1$ and $\beta=2$, respectively. We are interested in computing the moments ($k\in\Z$) and their large-$N$ expansions
\barr
D^{(\beta)}_N(k,\alpha) &=& 
\mathbb{E}[\mathrm{Tr}W_N^{k}]\ ,\label{eq:largeND} 
\earr
where from now on $\mathbb{E}[\cdot]$ denotes averaging with respect to \eqref{eq:wishart}, and $\alpha$ is large enough to ensure that~\eqref{eq:largeND} is finite.

\begin{rmk}\label{rmk:our} For our physical application on the Wigner-Smith time delay matrix we have $\tau_{k}^{(\beta)}=N^{k-1}D^{(\beta)}_N(-k,N+2-\beta)$, for $\beta=1$ and $2$. 
\end{rmk}
Let us define the generating function
\begin{equation}
M_N^{(\beta)}(s) = \mathbb{E}\left[\mathrm{Tr}(W_Ne^{sW_N})\right]\ , \label{GF}
\end{equation}
where in what follows $s \leq 0$. For simplicity, the dependence of~\eqref{GF} on $\alpha$ is omitted. First we present a lemma.
\begin{lemma}
	\label{lem:1}
The generating function $M_N^{(\beta)}(s) $ is related to moments via the identity
\begin{equation}
D^{(\beta)}_N(k,\alpha)=
\begin{cases}
\displaystyle\frac{\partial^{k-1}M_N^{(\beta)}(s) }{\partial s^{k-1}}\biggl|_{s\to 0^{-}}&\text{if $k>0$}\ ,\\
\displaystyle\frac{(-1)^{|k|}}{|k|!}\int_{-\infty}^{0}M_N^{(\beta)}(s) s^{|k|}\,\mathrm{d} s &\text{if $k\leq0$}\ . \label{gfident}
\end{cases}
\end{equation} 
\end{lemma}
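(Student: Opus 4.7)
The plan is to treat the two cases separately, with both relying on the spectral decomposition $W_N = U\mathrm{diag}(\lambda_1,\dots,\lambda_N)U^{\dagger}$ so that $\mathrm{Tr}(W_N e^{sW_N}) = \sum_{i=1}^{N}\lambda_i e^{s\lambda_i}$.

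For the case $k>0$, I would differentiate under the expectation. Since for $s\leq 0$ and $W_N\succ 0$ the integrand $\mathrm{Tr}(W_N e^{sW_N})=\sum_i \lambda_i e^{s\lambda_i}$ is a sum of nonnegative quantities bounded by $\mathrm{Tr}(W_N)$ (which has finite expectation under \eqref{eq:wishart}), repeated differentiation is legitimate and yields
\begin{equation}
\frac{\partial^{k-1}}{\partial s^{k-1}}M_N^{(\beta)}(s)=\mathbb{E}\Bigl[\sum_{i=1}^{N}\lambda_i^{k}e^{s\lambda_i}\Bigr]=\mathbb{E}[\mathrm{Tr}(W_N^{k}e^{sW_N})].
\end{equation}
Dominated convergence as $s\to 0^-$ (using again the nonnegativity of the eigenvalues and monotonicity of $e^{s\lambda_i}\uparrow 1$) then gives $D^{(\beta)}_N(k,\alpha)=\mathbb{E}[\mathrm{Tr}\,W_N^{k}]$, which is the first identity.

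For $k\leq 0$, set $n=|k|$ and recall the elementary Gamma-integral
\begin{equation}
\int_{-\infty}^{0}s^{n}e^{s\lambda}\,ds=(-1)^{n}\,n!\,\lambda^{-(n+1)},\qquad \lambda>0,
\end{equation}
which is obtained by the substitution $u=-s$. Multiplying by $\lambda$ and summing over the eigenvalues of $W_N$ yields the deterministic identity
\begin{equation}
\frac{(-1)^{n}}{n!}\int_{-\infty}^{0}\mathrm{Tr}(W_N e^{sW_N})\,s^{n}\,ds=\mathrm{Tr}(W_N^{-n})=\mathrm{Tr}(W_N^{k}).
\end{equation}
Taking expectations and swapping the order of expectation and integration (Fubini--Tonelli, justified by the pointwise nonnegativity of $\lambda_i e^{s\lambda_i}(-s)^{n}$ together with the assumption that $\alpha$ is large enough so that $\mathbb{E}[\mathrm{Tr}\,W_N^{k}]$ is finite) produces the second identity of \eqref{gfident}.

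The only delicate point is the Fubini step in the $k\leq 0$ case, since one needs the joint integrability of $\lambda_i e^{s\lambda_i}(-s)^{n}$ with respect to $ds$ and the Wishart-Laguerre measure simultaneously. Here the positivity of the integrand is decisive: by Tonelli, the iterated integrals coincide (finite or not) and, once we know that the right-hand side equals $\mathbb{E}[\mathrm{Tr}\,W_N^{k}]$, the finiteness assumption on $\alpha$ (controlling negative moments through the factor $\det(W_N)^{\beta(\alpha+1)/2-1}$ near $0$) ensures that both sides are finite and the identity is genuine rather than an equality of $+\infty$.
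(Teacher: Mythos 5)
Your proof is correct and follows essentially the same route as the paper: diagonalize $W_N$, reduce the $k\leq 0$ case to the scalar Gamma integral $\int_{-\infty}^{0}s^{|k|}e^{s\lambda}\,\mathrm{d}s=(-1)^{|k|}|k|!\,\lambda^{-(|k|+1)}$, and sum over eigenvalues. The only additions are your explicit Tonelli justification for exchanging $\mathbb{E}$ with the $s$-integral and your spelled-out treatment of the $k>0$ case, both of which the paper leaves implicit ("classical formula").
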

The next ingredient is that $M_N^{(\beta)}(s) $ satisfies a differential equation.
\begin{theorem}[Generating function for $\beta=2$, adaptation of Theorem 6.4 in \cite{Haagerup03}]
\label{thm:Haagerup}
For all $N\geq1$, $M_N^{(2)}$ satisfies the following homogeneous second-order differential equation
\begin{equation}
s(1-s^{2})M_N^{(2)\prime\prime}+(3-2(\alpha+2N)s-5s^{2})M_N^{(2)\prime}-(3(\alpha+2N)+4s-\alpha^{2}s)M_N^{(2)} = 0\ . \label{ode}
\end{equation}
\end{theorem}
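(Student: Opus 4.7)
\noindent\textit{Plan of proof.} The plan is to exploit the orthogonal polynomial (Christoffel--Darboux) structure of the Laguerre unitary ensemble, in the spirit of Haagerup--Thorbj\o rnsen's original argument. For $\beta = 2$ the joint density of the eigenvalues of $W_{N}$ on $\R_{+}^{N}$ is proportional to $\prod_{p<q}(\lambda_{p}-\lambda_{q})^{2}\prod_{\ell}\lambda_{\ell}^{\alpha}e^{-\lambda_{\ell}}$, and the mean eigenvalue density factorises through the orthonormal Laguerre polynomials $\tilde{L}_{j}^{(\alpha)}$ as $\rho_{N}(x) = x^{\alpha}e^{-x}\sum_{j=0}^{N-1}\tilde{L}_{j}^{(\alpha)}(x)^{2}$, so that $M_{N}^{(2)}(s) = \int_{0}^{\infty} x\, e^{sx}\rho_{N}(x)\, dx$ for $s \leq 0$.

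First I would write $M_{N}^{(2)}(s)$ as a finite sum of integrals indexed by $j$ and, inside each term, apply the Laguerre differential equation $xy'' + (\alpha+1-x)y' + jy = 0$ satisfied by $y = L_{j}^{(\alpha)}$. Integrating by parts against $x^{\alpha+1}e^{(s-1)x}$ (the boundary terms vanish for $s < 1$, and the resulting identity extends to $s\leq 0$ by analyticity), each $\int x\, e^{sx}\tilde{L}_{j}^{(\alpha)}(x)^{2} x^{\alpha}e^{-x}\,dx$ gets tied to off-diagonal overlaps $\int x^{a} e^{sx} \tilde{L}_{j}^{(\alpha)}\tilde{L}_{j\pm 1}^{(\alpha)} x^{\alpha}e^{-x}\,dx$ with $a \in \{0,1,2\}$. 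These are then reduced back to diagonal integrals---and hence to $M_{N}^{(2)}$ and its $s$-derivatives---via the three-term recurrence $xL_{j}^{(\alpha)} = (2j+\alpha+1)L_{j}^{(\alpha)} - (j+1)L_{j+1}^{(\alpha)} - (j+\alpha)L_{j-1}^{(\alpha)}$ and the derivative identity $x(L_{j}^{(\alpha)})'(x) = jL_{j}^{(\alpha)}(x) - (j+\alpha)L_{j-1}^{(\alpha)}(x)$.

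Next I would sum over $j = 0, \ldots, N-1$. The Christoffel--Darboux mechanism should force the interior contributions to telescope, leaving only boundary terms at $j = N - 1$ (and trivial ones at $j = 0$) which collapse the $j$-sum into a closed second-order linear relation among $M_{N}^{(2)}$, $M_{N}^{(2)\prime}$ and $M_{N}^{(2)\prime\prime}$, with polynomial coefficients in $s$, $N$ and $\alpha$. As a consistency check, at $s = 0$ the identity reduces to $3 M_{N}^{(2)\prime}(0) = 3(\alpha+2N) M_{N}^{(2)}(0)$, i.e.\ $\E[\Tr W^{2}] = (\alpha + 2N)\E[\Tr W]$, which matches the well-known finite-$N$ Wishart identity $N(N+\alpha)(2N+\alpha) = (\alpha+2N)\cdot N(N+\alpha)$, and an analogous check on $M''(0)=\E[\Tr W^{3}]$ can be performed against the exact finite-$N$ moment.

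The hard part will be the second and third steps: generic integration by parts also produces longer-range terms involving $\tilde{L}_{j\pm 2}^{(\alpha)}$, and only the \emph{specific} combination dictated by the Laguerre ODE together with the Sturm--Liouville (Christoffel--Darboux) pairing causes these to cancel and yields a second-order closure with the precise coefficients $s(1-s^{2})$, $3 - 2(\alpha + 2N)s - 5s^{2}$ and $-(3(\alpha + 2N) + 4s - \alpha^{2}s)$ appearing in the theorem. A probably cleaner alternative is to run loop equations directly in the matrix integral: the identities $\int \partial_{W_{ij}}\bigl[(W^{a}e^{sW})_{ij}\, P_{2,\alpha}(W)\bigr]\, dW = 0$ for $a = 0, 1, 2$ yield a linear system among $\E[\Tr(W^{m}e^{sW})]$ for $m = -1, 0, 1, 2, 3$, in which the unwanted auxiliary $\E[\Tr(W^{-1}e^{sW})]$ (produced by $\partial_{W}\det(W)^{\alpha}$) can be eliminated to produce the claimed ODE for $M_{N}^{(2)} = \E[\Tr(W e^{sW})]$. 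Either route funnels the bookkeeping into the same algebraic identity, but the orthogonal polynomial route is the one actually followed in the cited Haagerup--Thorbj\o rnsen argument.
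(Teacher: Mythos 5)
Your plan diverges from the paper's actual proof in a substantive way, and as written it does not close. The paper (following Haagerup--Thorbj{\o}rnsen) does not manipulate the sum $\sum_{j=0}^{N-1}\tilde L_j^{(\alpha)}(x)^2$ term by term. It uses the Christoffel--Darboux (difference) form of $\rho_N^{(2)}$, observes that the Laguerre equation yields the compact identity $\frac{\mathrm{d}}{\mathrm{d}x}\bigl(x\rho_N^{(2)}(x)\bigr)=\sqrt{N(N+\alpha)}\,L_N^{(\alpha)}(x)L_{N-1}^{(\alpha)}(x)x^{\alpha}e^{-x}$, integrates by parts once in the Laplace transform, and then evaluates the remaining integral in closed form via the substitution $u=x(1-s)$ and the Laguerre scaling identity, arriving at the explicit expression $M_N^{(2)}(s)=N(\alpha+N)\,{}_2F_1(1-\alpha-N,1-N;2;s^2)(1-s)^{-(\alpha+2N)}$; the ODE \eqref{ode} is then read off from the classical hypergeometric equation. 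Your route --- reducing off-diagonal overlaps by the three-term recurrence and hoping the $j$-sum telescopes to a second-order closure --- is closer in spirit to Ledoux's rederivation \cite{Led04} and can in principle be made to work, but the entire content of the theorem is the specific coefficients $s(1-s^2)$, $3-2(\alpha+2N)s-5s^2$ and $-(3(\alpha+2N)+4s-\alpha^2 s)$, and you explicitly defer the step that produces them (``the hard part will be\dots'') without exhibiting the cancellation of the longer-range $\tilde L_{j\pm2}^{(\alpha)}$ contributions. Until that is carried out, the statement is not established.

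Your proposed loop-equation fallback has a concrete defect. For the exponential observable, $\partial_{W_{ij}}(e^{sW})_{kl}$ is given by Duhamel's formula $s\int_0^1(e^{stW})_{ki}(e^{s(1-t)W})_{jl}\,\mathrm{d}t$, so the identities $\int\partial_{W_{ij}}\bigl[(W^a e^{sW})_{ij}P_{2,\alpha}(W)\bigr]\mathrm{d}W=0$ produce terms of the form $s\int_0^1\E\bigl[\Tr(W^b e^{stW})\Tr(e^{s(1-t)W})\bigr]\mathrm{d}t$, i.e.\ expectations of \emph{products} of traces convolved in $s$. These cannot be removed by a finite linear combination over $a$, so the system does not close into a finite-order linear ODE for $M_N^{(2)}$ alone; the exact finite-$N$ closure genuinely comes from the orthogonal-polynomial structure. (Your consistency check at $s=0$, namely $\E[\Tr W^2]=(\alpha+2N)\E[\Tr W]$, is correct, but it only tests the constant terms of the coefficients.) The cleanest repair is to follow the paper's route: establish the closed form \eqref{mgfgauss} and transfer the hypergeometric differential equation.
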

\begin{theorem}[Generating function for $\beta=1$]
\label{thm:orthogonal}
For all $N\geq1$, $M_N^{(1)}$ satisfies the following inhomogeneous second-order differential equation
\begin{equation}
(4s^{3}-s)M_N^{(1)\prime\prime}+s(16s+2(\alpha-1)+4N)M_N^{(1)\prime}+s(9-\alpha^{2})M_N^{(1)}=
(3s+3)M_{N-1}^{(2)\prime}-(3\alpha+6N-6)M_{N-1}^{(2)}\ . \label{odeb1}
\end{equation}
\end{theorem}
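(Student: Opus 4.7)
The plan is to mimic the Haagerup--Thorbjornsen derivation used for Theorem~\ref{thm:Haagerup}, while tracking the extra Pfaffian structure of the orthogonal ensemble, which is ultimately responsible for the inhomogeneous right-hand side. I would first convert $M_N^{(1)}(s)$ into an eigenvalue integral via the Weyl integration formula,
\[
M_N^{(1)}(s) = \int_0^\infty x\,e^{sx}\,\rho_N^{(1)}(x)\,\mathrm{d}x,
\]
where $\rho_N^{(1)}$ is the one-point correlation function of the LOE with parameter $\alpha$.

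Next, I would invoke a Widom-type identity relating the LOE one-point function to that of the LUE of one smaller size, namely $\rho_N^{(1),\alpha}(x) = \rho_{N-1}^{(2),\alpha'}(x) + \mathcal{R}_N(x)$, where the correction $\mathcal{R}_N$ is a finite-rank term built from a single Laguerre polynomial and its primitive weighted by $x^{(\alpha-1)/2}e^{-x/2}$. Substituting this decomposition gives $M_N^{(1)}(s) = \mathcal{A}(s) + \mathcal{B}(s)$, where $\mathcal{A}$ is essentially a linear transform of $M_{N-1}^{(2)}(s)$ and $\mathcal{B}$ comes from integrating $x e^{sx}$ against $\mathcal{R}_N$.

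The ODE is then produced by applying integration by parts to each piece, using the classical Laguerre ODE $xL_k''(x) + (\alpha+1-x)L_k'(x) + kL_k(x) = 0$ together with the three-term recurrence. Because $\partial_s$ under the integral brings down a factor of $x$, one can build a second-order differential operator in $s$ with polynomial coefficients whose action on $M_N^{(1)}$ closes up to residual terms arising from $\mathcal{B}$. Matching those residuals against a linear combination of $M_{N-1}^{(2)}$ and $M_{N-1}^{(2)\prime}$ reproduces the announced right-hand side $(3s+3)M_{N-1}^{(2)\prime} - (3\alpha + 6N - 6)M_{N-1}^{(2)}$, and simultaneously fixes the left-hand side coefficients as $(4s^3 - s)$, $s(16s + 2(\alpha-1) + 4N)$, and $s(9-\alpha^2)$.

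The main obstacle is the bookkeeping in this last step: identifying $\mathcal{B}(s)$ with exactly that linear combination, rather than a more complicated hypergeometric expression, requires careful tracking of boundary terms at $x=0$ (where the Laguerre weight can be singular for small $\alpha$) and a clean choice of polynomial prefactors in $s$. As a safeguard, one can verify the ODE order-by-order in $s$: expanding both sides in Taylor series reduces the claim to a family of identities between moments $\mathbb{E}[\Tr W_N^k]$ for $\beta=1$ and $\mathbb{E}[\Tr W_{N-1}^k]$ for $\beta=2$, which are accessible via the explicit formulas for Laguerre Wishart moments; low-$N$, low-$s$ sanity checks should validate the precise coefficients appearing in~\eqref{odeb1}.
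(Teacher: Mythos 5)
Your proposal follows essentially the same route as the paper: the decomposition $\rho_N^{(1)} = \rho_{N-1}^{(2)} + \mathcal{R}_N$ with a finite-rank skew-orthogonal correction (taken from Adler--Forrester--Nagao--van Moerbeke, valid for $N$ even and with the same parameter $\alpha$), followed by integration by parts against the Laguerre eigenfunction relation, is exactly what the paper does, with the announced coefficients emerging after eliminating $M_{N-1}^{(2)\prime\prime}$ via the $\beta=2$ equation of Theorem~\ref{thm:Haagerup}. The points you leave implicit --- the parity dependence of the correction term (handled by rationality of the moments in $N$) and that final elimination step --- are bookkeeping within the same strategy rather than gaps.
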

From these differential equations, it is easy to get a recurrence relation on the moments $D^{(\beta)}_N(k,\alpha)$ using Lemma~\ref{lem:1}. Indeed, for $k<0$, multiplying \eqref{ode}-\eqref{odeb1} by $s^{k}$ and integrating from $-\infty$ to $0$, one removes all mention of $M^{(\beta)\prime\prime}_N(s)$ and $M^{(\beta)\prime}_N(s)$ by successive integrations by parts and then applies identity~\eqref{gfident} to compute all integrals. Similarly, for $k>0$, one differentiates  \eqref{ode}-\eqref{odeb1} $k$ times and then applies~\eqref{gfident}. Several cancellations simplify the outcome considerably. The final results are the following recurrence relations. 
\begin{theorem}[Moments of complex Wishart matrices, $\beta=2$]
The moments $D^{(2)}_N(k,\alpha)$ satisfy
\begin{equation}
(k+2)D^{(2)}_N(k+1,\alpha)-(2k+1)(\alpha+2N)D^{(2)}_N(k,\alpha)-(k-1)(k^{2}-\alpha^{2})D^{(2)}_N(k-1,\alpha)=0\ . \label{eq:recusionalpha}
\end{equation}
\end{theorem}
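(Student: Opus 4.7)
The plan is to convert the differential equation~\eqref{ode} of Theorem~\ref{thm:Haagerup} into the claimed three-term recurrence for $D^{(2)}_N(k,\alpha)$ via Lemma~\ref{lem:1}, following the recipe already sketched in the text. The positive and negative moment cases are handled by two complementary operations on~\eqref{ode}, and both collapse onto the same recursion.

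For $k\geq 1$, I differentiate~\eqref{ode} exactly $k-1$ times in $s$ using the Leibniz rule, and evaluate at $s\to 0^-$. Because the coefficient polynomials $s(1-s^2)$, $3-2(\alpha+2N)s-5s^2$, and $-(3(\alpha+2N)+(4-\alpha^2)s)$ all have degree at most three, only a handful of Leibniz cross-terms survive the $s\to 0^-$ limit, and the surviving ones involve only $\partial_s^{k}M_N^{(2)}(0^-)$, $\partial_s^{k-1}M_N^{(2)}(0^-)$, and $\partial_s^{k-2}M_N^{(2)}(0^-)$. A direct tally of the binomial coefficients produces respective factors $k+2$, $-(2k+1)(\alpha+2N)$, and $-(k-1)\bigl[(k-2)(k-3)+5(k-2)+4-\alpha^{2}\bigr]$; the algebraic simplification $(k-2)(k-3)+5(k-2)+4=k^{2}$ collapses the last bracket to $k^{2}-\alpha^{2}$. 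Applying the first branch of~\eqref{gfident} to replace $\partial_s^{j}M_N^{(2)}(0^-)$ by $D^{(2)}_N(j+1,\alpha)$ then delivers~\eqref{eq:recusionalpha} exactly.

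For $k\leq 0$, write $n=|k|$, multiply~\eqref{ode} by $s^{n-1}$, and integrate the resulting identity over $(-\infty,0)$. Two integrations by parts on the $M_N^{(2)\prime\prime}$ contributions and one on the $M_N^{(2)\prime}$ contributions remove all derivatives of $M_N^{(2)}$; what remains are integrals of $M_N^{(2)}(s)$ against $s^{n-2}$, $s^{n-1}$, and $s^{n}$. Boundary contributions at $s\to-\infty$ vanish because the Wishart density~\eqref{eq:wishart} yields super-exponential decay of $M_N^{(2)}$ and its derivatives, while the boundary terms at $s=0$ vanish because every coefficient polynomial in~\eqref{ode} is divisible by $s$ (and becomes more strongly so after multiplication by $s^{n-1}$). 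The second branch of~\eqref{gfident} then identifies the three remaining integrals with $D^{(2)}_N(k-1,\alpha)$, $D^{(2)}_N(k,\alpha)$, and $D^{(2)}_N(k+1,\alpha)$; once the $(-1)^{n}/n!$ prefactors are reconciled, the same recursion~\eqref{eq:recusionalpha} reappears, now valid for all integer $k$ (with $k=0$ an immediate consistency check).

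The main obstacle is purely clerical bookkeeping. On the differentiation side, one must verify that the cubic-degree bound on the coefficients really leaves only the three advertised moments and that the pivotal identity $(k-2)(k-3)+5(k-2)+4=k^{2}$ is correctly invoked to produce the clean factor $k^{2}-\alpha^{2}$. On the integration side, one must track how the factorial prefactors from~\eqref{gfident} combine with the combinatorial prefactors coming from repeated integration by parts so that the coefficients $k+2$, $-(2k+1)(\alpha+2N)$, and $-(k-1)(k^{2}-\alpha^{2})$ are reproduced for $k\leq 0$ as well. Once both calculations are carried out, the recurrence~\eqref{eq:recusionalpha} follows.
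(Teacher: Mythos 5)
Your proposal follows exactly the route the paper itself takes (and only sketches): differentiate the ODE of Theorem~\ref{thm:Haagerup} via Leibniz for positive $k$, multiply by a power of $s$ and integrate by parts over $(-\infty,0)$ for non-positive $k$, then convert everything into moments with Lemma~\ref{lem:1}; your positive-$k$ bookkeeping is correct, the surviving terms being $(k+2)M^{(k)}(0)$, $-(2k+1)(\alpha+2N)M^{(k-1)}(0)$ and $-(k-1)\bigl[(k-2)(k-3)+5(k-2)+4-\alpha^2\bigr]M^{(k-2)}(0)$ with the bracket collapsing to $k^2-\alpha^2$ as you say. Two clerical points on the negative side: multiplying by $s^{|k|-1}$ leaves integrals of $M$ against $s^{|k|-2}$, $s^{|k|-1}$, $s^{|k|}$, which Lemma~\ref{lem:1} identifies with $D^{(2)}_N(k+2,\alpha)$, $D^{(2)}_N(k+1,\alpha)$, $D^{(2)}_N(k,\alpha)$ rather than with $D^{(2)}_N(k-1,\alpha)$, $D^{(2)}_N(k,\alpha)$, $D^{(2)}_N(k+1,\alpha)$ as you state (so you obtain the recursion shifted to index $k+1$; to land on index $k$ one should multiply by $s^{|k|}$), and the coefficient $3-2(\alpha+2N)s-5s^2$ of $M^{(2)\prime}_N$ is \emph{not} divisible by $s$ --- the boundary terms at $s=0$ vanish only because of the extra factor $s^{|k|-1}$ you multiplied in, which requires $|k|\geq 2$ (or $|k|\geq 1$ with the corrected power), so the small-$|k|$ cases deserve a separate check.
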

\begin{theorem}[Moments of real Wishart matrices, $\beta=1$]
The moments $D^{(1)}_N(k,\alpha)$ satisfy
\barr
D^{(1)}_N(k+1,\alpha)-(2(\alpha-1)+4N)D^{(1)}_N(k,\alpha)-(1-\alpha^2+4k(k-1))D^{(1)}_N(k-1,\alpha)\nonumber\\
=\frac{3}{k-1}((\alpha+2N-k-1)D^{(2)}_{N-1}(k,\alpha)-D^{(2)}_{N-1}(k+1,\alpha))\ .
 \label{eq:recusionalphab1}
\earr
\end{theorem}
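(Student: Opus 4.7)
The plan is to execute the recipe sketched in the paragraph preceding the statement, applied to the inhomogeneous ODE of Theorem~\ref{thm:orthogonal}. I describe the steps for $k \geq 2$; the range $k \leq 0$ is handled analogously by multiplying the ODE by $s^{|k|}$, integrating on $(-\infty,0)$, repeatedly integrating by parts to remove $M_N^{(1)\prime\prime}$, $M_N^{(1)\prime}$ and $M_{N-1}^{(2)\prime}$, and invoking the second line of~\eqref{gfident} to identify the remaining integrals as moments. Boundary terms vanish at $0$ by the extra factor $s^{|k|}$ and at $-\infty$ by the Gaussian-type decay of $M_N^{(\beta)}(s)$.

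The core step in the range $k \geq 2$ is to differentiate the ODE exactly $k-1$ times in $s$ via Leibniz and then send $s \to 0^-$. Every polynomial coefficient on the left is divisible by $s$, so in each Leibniz sum only a few values of $j$ survive: $j \in \{1,3\}$ for $4s^3-s$, $j \in \{1,2\}$ for $s(16s+2(\alpha-1)+4N)$, and $j = 1$ for $s(9-\alpha^2)$. On the right, $3s+3$ contributes at $j \in \{0,1\}$ and the constant $-(3\alpha+6N-6)$ only at $j=0$. Lemma~\ref{lem:1} then identifies each $\partial_s^{j-1} M_N^{(\beta)}(s)|_{s\to 0^-}$ with $D_N^{(\beta)}(j,\alpha)$, collapsing the two sides to linear combinations of $D_N^{(1)}(k-1,\alpha), D_N^{(1)}(k,\alpha), D_N^{(1)}(k+1,\alpha)$ and of $D_{N-1}^{(2)}(k,\alpha), D_{N-1}^{(2)}(k+1,\alpha)$ respectively. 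A common factor $(k-1)$ on the left, arising from the binomial $\binom{k-1}{1}$, lets one divide through to produce the prefactor $\frac{3}{k-1}$ in the statement.

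The main obstacle is bookkeeping rather than any conceptual hurdle. The coefficient of $D_N^{(1)}(k-1,\alpha)$ receives three contributions---from $j=3$ in $4s^3-s$, $j=2$ in $s(16s+\cdots)$, and $j=1$ in $s(9-\alpha^2)$---which must combine as $(k-1)\bigl[4(k-1)(k-2)+16(k-1)+9-\alpha^2\bigr] = (k-1)\bigl[(2k-1)^2-\alpha^2\bigr] = (k-1)(1-\alpha^2+4k(k-1))$; this explains the slightly exotic coefficient $1-\alpha^2+4k(k-1)$ in the statement. A similar recombination on the inhomogeneous side must yield exactly $3\bigl[(\alpha+2N-k-1)D_{N-1}^{(2)}(k,\alpha) - D_{N-1}^{(2)}(k+1,\alpha)\bigr]$. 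As a preliminary check it is prudent to rehearse the full Leibniz procedure on the homogeneous $\beta=2$ ODE of Theorem~\ref{thm:Haagerup}, whose output is the already-established recursion~\eqref{eq:recusionalpha}, in order to fix signs and conventions before confronting the $\beta=1$ case.
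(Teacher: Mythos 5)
Your proposal follows the paper's proof exactly: the paper likewise obtains \eqref{eq:recusionalphab1} by differentiating the ODE \eqref{odeb1} the appropriate number of times (and, for the negative moments, multiplying by a power of $s$ and integrating by parts over $(-\infty,0)$) and then invoking Lemma~\ref{lem:1}, so both the strategy and the resulting coefficients coincide. There is one bookkeeping slip in the single computation you display: after $k-1$ differentiations the $j=3$ and $j=2$ Leibniz terms contribute $24\binom{k-1}{3}=4(k-1)(k-2)(k-3)$ and $32\binom{k-1}{2}=16(k-1)(k-2)$, so the bracket should read $4(k-2)(k-3)+16(k-2)+9-\alpha^2=(2k-1)^2-\alpha^2$; the expression you wrote, $4(k-1)(k-2)+16(k-1)+9-\alpha^2$, equals $(2k+1)^2-\alpha^2$, so your displayed chain of equalities is internally inconsistent even though the endpoint $(k-1)\bigl(1-\alpha^2+4k(k-1)\bigr)$ is the correct coefficient. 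A second small inaccuracy: $M_N^{(\beta)}(s)$ decays only polynomially as $s\to-\infty$ (like $|s|^{-(\alpha+2)}$, cf.\ \eqref{mgfgauss}), not like a Gaussian; the boundary terms still vanish because the paper assumes $\alpha$ large enough for the moments in question to exist.
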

\begin{rmk}
Haagerup and  Thorbj{\o}rnsen~\cite[Theorem 8.5]{Haagerup03} proved~\eqref{ode} and then deduced the finite-$N$ recurrence~\eqref{eq:recusionalpha} for positive moments $k>0$,  thus generalizing the Harer-Zagier
recursion formula for GUE matrices~\cite{Harer86} to the complex Wishart ensemble. Here we show that the same recursion holds for negative moments ($k\leq0$). The differential equation~\eqref{odeb1} and the recursion~\eqref{eq:recusionalphab1} for $\beta=1$ are new results. The inhomogeneous term in the differential equation~\eqref{odeb1} for $\beta=1$ is related to a $\beta=2$ ensemble; the reason for this will become clear in the proof below (see Eq.~\eqref{eq:Masrho}-\eqref{beta1dens}). Theorem~\ref{thm:recursionb2} is a specialization of~\eqref{eq:recusionalpha} renaming $\tau_{k}^{(2)}= N^{k-1}D_N^{(2)}(-k,\alpha=N)$. Theorem~\ref{thm:recursionb1} is a specialization of~\eqref{eq:recusionalphab1} renaming $\tau_{k}^{(1)}= N^{k-1}D^{(1)}_N(-k,\alpha=N+1)$ and $b_k=N^{k-1}D^{(2)}_{N-1}(-k,\alpha=N+1)$.

In both real and complex cases, traces of powers of Wishart matrices were considered for generic covariance matrix $\Sigma$ using orthogonal and unitary Weingarten functions \cite{Letac,Matsumoto12}. However, even in the simplest case $\Sigma=I$ (the one considered here), the Weingarten functions are not easy to compute. Furthermore, it is not clear how to obtain an asymptotic $1/N$-expansion from such formulae.
\end{rmk}
\begin{proof}[Proof of Lemma~\ref{lem:1}] For $k>0$, Eq.~\eqref{gfident} is a classical formula. Let us consider the case of negative moments. By a unitary transformation we can write $W_N = U\Lambda U^{\dagger}$ where $\Lambda$ is diagonal, so that $W_Ne^{sW_N}$ = $U\Lambda e^{s\Lambda }U^{\dagger}$ and the left-hand side of \eqref{gfident} is
	\begin{equation}
	\mathrm{Tr}\left(U\int_{-\infty}^{0}\Lambda e^{s\Lambda}s^{k}\,\mathrm{d}s\,U^{\dagger}\right) = \mathrm{Tr}\left(\int_{-\infty}^{0}\Lambda e^{s\Lambda }s^{k}\,\mathrm{d}s\,\right)\ ,
	\end{equation}
	where the integral acts on each diagonal entry via
	\begin{equation}
	\int_{-\infty}^{0}\lambda_{j}e^{s\lambda_{j}}s^{k}\,\mathrm{d}s = \frac{1}{\lambda_{j}^{k}}(-1)^{k}k!\ .
	\end{equation}
	Then the trace is the sum over $j$ and one obtains the right-hand side of \eqref{gfident}.
\end{proof}

\subsection{Proof of the technical results}
\label{sub:proof}
\begin{proof}[Proof of Theorems~\ref{thm:Haagerup} and~\ref{thm:orthogonal}] 
We introduce the finite-$N$ average density of eigenvalues of $W_{N}$
\be
\rho^{(\beta)}_{N}(x) =\E\left[\frac{1}{N}\sum_{i=1}^N\delta(x-\lambda_i)\right]\ ,
\ee
where the expectation is taken with respect to $P_{\beta,\alpha}$ in \eqref{eq:wishart}. The generating function $M_N^{(\beta)}(s)$ can be written as
\be
M_N^{(\beta)}(s) = \int_{0}^{\infty}xe^{sx}\rho_{N}^{(\beta)}(x)\,\mathrm{d}x\ . \label{eq:Masrho}
\ee
The fundamental ingredient is that for $\beta=2$ and $\beta=1$ the finite-$N$ eigenvalue density $\rho^{(\beta)}_{N}(x)$ is given explicitly in terms of Laguerre polynomials. We denote the standard Laguerre polynomial of degree $N$ and parameter $\alpha$ by
\begin{equation}
L^{(\alpha)}_{N}(x) = \sum_{i=0}^{N}(-1)^{i}\binom{N+\alpha}{N-i}\frac{x^{i}}{i!}\ .
\end{equation}
They satisfy the second order differential equation
\begin{equation}
xL_N^{(\alpha)\hspace{1pt}\prime\prime}(x)+(1-x+\alpha)L_N^{(\alpha)\hspace{1pt}\prime}(x)+NL^{(\alpha)}_{N}(x)=0\ . \label{laguerreode}
\end{equation}
The following formulae for the mean eigenvalue densities are well-known. For $\beta=2$, the Christoffel-Darboux formula leads to the expression
\begin{equation}
\rho^{(2)}_{N}(x) = \frac{\Gamma(N+1)}{\Gamma(N+\alpha)}(L^{(\alpha)}_{N}(x)L^{(\alpha)\hspace{1pt} \prime}_{N-1}(x)-L^{(\alpha)\hspace{1pt} \prime}_{N}(x)L^{(\alpha)}_{N-1}(x))x^{\alpha}e^{-x}\ . \label{beta2dens}
\end{equation}
For $\beta=1$, explicit forms for the mean eigenvalue density for the Laguerre ensembles were given in \cite{FNH99,Wid99}. The result is that $\rho^{(1)}_{N}(x)$ can be expressed in terms of $\rho^{(2)}_{N-1}(x)$ plus a correction term. The correction term was understood in a more general context in \cite{AFNM00}, where it was formulated in a slightly different way that turns out to be useful here. Moreover, the correction term depends on the parity of $N$. Nevertheless, the moments $\tau_{k}^{(1)}$ are always rational functions of $N$, and hence they are uniquely determined by subsequences like even $N$'s. For simplicity, in the case $\beta=1$ we will perform our computations for $N$ even, but our final results do not depend on the parity of $N$. From \cite{AFNM00}, for $N$ even we have
\begin{equation}
\rho^{(1)}_{N}(x) = \rho^{(2)}_{N-1}(x)-d_{N}x^{(\alpha-1)/2}e^{-x/2}L^{(\alpha)}_{N-2}(x)\psi(x)\ , \label{beta1dens}
\end{equation}
where the constant $d_{N}$ is given by
\begin{equation}
d_{N} = \frac{1}{4}\frac{\Gamma(N)}{\Gamma(\alpha+N-1)}
\end{equation}
and
\begin{equation}
\psi(x) = \int_{0}^{\infty}\mathrm{sgn}(x-y)y^{(\alpha-1)/2}e^{-y/2}L^{(a)}_{N-2}(y)\,\mathrm{d}y\ .
\end{equation}
The plan is now to insert these expressions into \eqref{eq:Masrho} and derive differential equations for the generating functions.
\flushleft \textbf{$\beta=2$ : Derivation of equation \eqref{ode}}\\
\flushleft Equation \eqref{ode} has been derived in at least two ways in the literature, first by Haagerup and Thorbj{\o}rnsen \cite{Haagerup03} and then rediscovered in a more general setting by Ledoux \cite{Led04}. We will sketch below the proof given in \cite{Haagerup03}, where the idea is to prove that
\begin{equation}
M^{(2)}_N(s) =N(\alpha+N)_2F_1(1-\alpha-N,1-N;2;s^{2})(1-s)^{-(\alpha+2N)}\ . \label{mgfgauss}
\end{equation}
It satisfies a classical second-order differential equation, which after some lengthy algebraic manipulations yields \eqref{ode}. To prove \eqref{mgfgauss}, a crucial role is played by the classical second-order differential equation \eqref{laguerreode} satisfied by the Laguerre polynomials.
Combining the differential equation with the expression \eqref{beta2dens} leads to
\begin{equation}
\frac{\mathrm{d}}{\mathrm{d}x}\left(x\rho^{(2)}_{N}(x)\right) = \sqrt{N(N+\alpha)}L^{(\alpha)}_{N}(x)L^{(\alpha)}_{N-1}(x)x^{\alpha}e^{-x}\ .
\end{equation}
Then integration by parts in \eqref{eq:Masrho} results in
\begin{equation}
M^{(2)}_{N}(s) = -\frac{\sqrt{N(N+\alpha)}}{s}\int_{0}^{\infty}e^{-(1-s)x}x^{\alpha}L^{(\alpha)}_{N}(x)L^{(\alpha)}_{N-1}(x)\,\mathrm{d}x\ . \label{intbeta2}
\end{equation}
To compute the integral in \eqref{intbeta2}, substitute $u=x(1-s)$ and make use of the scaling identity
\begin{equation}
L^{(\alpha)}_{N}(cx) = \sum_{r=0}^{N}\binom{N+\alpha}{N-r}c^{r}(1-c)^{N-r}L^{(\alpha)}_{r}(x)\ ,
\end{equation}
with $c = (1-s)^{-1}$, which reduces the integral to the orthogonality relation of Laguerre polynomials. The single summation that remains is recognised as the series definition of the hypergeometric function, and hence \eqref{mgfgauss}.

\begin{flushleft} \textbf{$\beta=1$ : Derivation of equation \eqref{odeb1}}\\

The proof of Theorem~\ref{thm:orthogonal} for $\beta=1$ is given below and is based on the paper~\cite{Ledoux} where the analogous computation was done for the Gaussian Orthogonal Ensemble. It was suggested in \cite{Ledoux} that the computation done there could in principle be carried out for other classical ensembles, but to our knowledge this has not been done before. 
\end{flushleft}
The starting point is the finite-$N$ formula \eqref{beta1dens} which we insert into \eqref{eq:Masrho}:
\be
M_N^{(1)}(s)=M_{N-1}^{(2)}(s)+d_N Y(s)\ ,\label{eq:M1M2}
\ee
where
\begin{equation}
Y(s) = -\int_{0}^{\infty}e^{sx}x^{(\alpha+1)/2}e^{-x/2}L^{(\alpha)}_{N-1}(x)\psi(x)\,\mathrm{d}x\ .
\end{equation}
From Theorem \ref{thm:Haagerup}, we may treat $M_{N-1}^{(2)}(s)$ in \eqref{eq:M1M2} as a known quantity. Hence we seek a differential equation for the second addendum in \eqref{eq:M1M2}.  We write the differential equation \eqref{laguerreode} as the eigenfunction relation
\begin{equation}
-TL^{(\alpha)}_{N}(x) = NL^{(\alpha)}_{N}(x)\ , \label{eigenrelation}
\end{equation}
where $Tf = xf''+(1+\alpha-x)f'$. Next, for any sufficiently smooth $f$ and $g$, the following identity is a direct consequence of integration by parts:
\begin{equation}
\int_{0}^{\infty}f(-Tg)\,x^{\alpha}e^{-x}\,\mathrm{d}x = \int_{0}^{\infty}xf'g'\,x^{\alpha}e^{-x}\,\mathrm{d}x\ . \label{parts}
\end{equation}
We now derive a differential equation for $Y(s)$. Taking a derivative with respect to $s$ and integrating by parts yields
\begin{align}
\nonumber \frac{\mathrm{d} Y}{\mathrm{d}s} &= -\int_{0}^{\infty}e^{sx}x^{(\alpha+3)/2}e^{-x/2}L^{(\alpha)}_{N-1}(x)\psi(x)\,\mathrm{d}x\\
\nonumber &= \frac{(\alpha+3)/2}{s-1/2}\int_{0}^{\infty}e^{sx}x^{(\alpha+1)/2}e^{-x/2}L^{(\alpha)}_{N-1}(x)\psi(x)\,\mathrm{d}x\\
\nonumber &+\frac{1}{s-1/2}\int_{0}^{\infty}e^{sx}x^{(\alpha+3)/2}e^{-x/2}L^{(\alpha)\hspace{1pt}\prime}_{N-1}(x)\psi(x)\,\mathrm{d}x\\
&+\frac{2}{s-1/2}\int_{0}^{\infty}xe^{sx}x^{\alpha}e^{-x}L^{(\alpha)}_{N-1}(x)L^{(\alpha)}_{N-2}(x)\,\mathrm{d}x\ . 
\end{align}
In the last integral we used that
$\psi'(x) = 2x^{(\alpha-1)/2}e^{-x/2}L^{(\alpha)}_{N-2}(x)$. We can rewrite this as
\begin{equation}
(s-1/2)Y'(s) = -\frac{\alpha+3}{2}Y(s)+2u_{N-1}'(s)+\chi_{N}(s)\ , \label{1der}
\end{equation}
where
\begin{align}
u_{N-1}(s) &= \int_{0}^{\infty}e^{sx}L^{(\alpha)}_{N-1}(x)L^{(\alpha)}_{N-2}(x)x^{\alpha}e^{-x}\,\mathrm{d}x\ ,\\
\chi_{N}(s) &= \int_{0}^{\infty}e^{sx}x^{(\alpha+3)/2}e^{-x/2}L^{(\alpha)\hspace{1pt}\prime}_{N-1}(x)\psi(x)\,\mathrm{d}x\ .
\end{align}
Note that $u_{N-1}(s)$ is closely related to the Laplace transform of $\rho^{(2)}_{N-1}(x)$ (cf. identity \eqref{intbeta2})
\be
u_{N-1}(s) = \frac{-sM^{(2)}_{N-1}(s)}{4d_{N}}\ . \label{eq:u_n1}
\ee
Differentiating \eqref{1der} one more time we arrive at
\begin{equation}
(s-1/2)Y''(s)+Y'(s)=-\frac{\alpha+3}{2}Y'(s)+2u_{N-1}''(s)+\chi_{N}'(s)\ . \label{2der}
\end{equation}
On the other hand, we can use \eqref{eigenrelation} and \eqref{parts} to show that
\begin{align}
\nonumber -(N-1)Y(s) &= \int_{0}^{\infty}e^{sx}x^{(1-\alpha)/2}e^{x/2}(-TL^{(\alpha)}_{N-1}(x))\,\psi(x)\,x^{\alpha}e^{-x}\,\mathrm{d}x \\
\nonumber &=\int_{0}^{\infty}(e^{sx}x^{(1-\alpha)/2}e^{x/2}\psi(x))'L^{(\alpha)\hspace{1pt}\prime}_{N-1}(x)\,x^{\alpha+1}e^{-x}\,\mathrm{d}x\\
\nonumber &=(s+1/2)\int_{0}^{\infty}e^{sx}x^{(\alpha+3)/2}e^{-x/2}L^{(\alpha)\hspace{1pt}\prime}_{N-1}(x)\,\psi(x)\,\mathrm{d}x \\
 \nonumber &+\frac{1-\alpha}{2}\int_{0}^{\infty}e^{sx}x^{(\alpha+1)/2}e^{-x/2}L^{(\alpha)\hspace{1pt}\prime}_{N-1}(x)\,\psi(x)\,\mathrm{d}x \\
&+2\int_{0}^{\infty}xe^{sx}x^{\alpha}e^{-x}L^{(\alpha)\hspace{1pt}\prime}_{N-1}(x)L^{(\alpha)}_{N-2}(x)\,\mathrm{d}x\ . \label{otoh3}
\end{align}
Differentiating \eqref{otoh3} gives the identity
\begin{equation}
-(N-1)Y'(s) = \chi_{N}(s)+(s+1/2)\chi_{N}'(s)+\frac{-\alpha+1}{2}\chi_{N}(s)+2K_{N}(s)\ , \label{tau1prime}
\end{equation}
where
\begin{equation}
K_{N}(s) = \int_{0}^{\infty}x^{2}e^{sx}x^{\alpha}e^{-x}L'_{N-1}(x)L_{N-2}(x)\,\mathrm{d}x\ . \label{KN1}
\end{equation}
Solving \eqref{1der} and \eqref{2der} for $\chi_{N}(s)$ and $\chi_{N}'(s)$ and inserting the result into \eqref{tau1prime} gives
\begin{equation}
(s^{2}-1/4)Y''+\left(4s+\frac{\alpha-1}{2}+N\right)Y'+\frac{9-\alpha^{2}}{4}Y+(\alpha-3)u_{N-1}'-(2s+1)u_{N-1}''+2K_{N}=0\ . \label{kneqn}
\end{equation}
Our aim is now to express $K_{N}$ in terms of known quantities. Integrating by parts in \eqref{KN1}, we find
\begin{equation}
K_{N}(s) = -(2+\alpha)u_{N-1}'-(s-1)u_{N-1}''-\int_{0}^{\infty}x^{2}e^{sx}x^{\alpha}e^{-x}L_{N-1}(x)L'_{N-2}(x)\,\mathrm{d}x\ . \label{KN2}
\end{equation}
Adding the two representations \eqref{KN1} and \eqref{KN2} shows that
\begin{align}
&(\alpha-3)u_{N-1}'(s)-(2s+1)u_{N-1}''(s)+2K_{N}(s)\\
&=(\alpha-3)u_{N-1}'(s)+(s-1)u_{N-1}''(s)-3su_{N-1}''(s)+2K_{N}(s) \label{solvekn}\\
&=-5u_{N-1}'(s)-3su_{N-1}''(s)-\xi_{N}(s)\ , \label{xieqn}
\end{align}
where 
\begin{equation}
\xi_{N}(s) = \int_{0}^{\infty}e^{sx}x^{2}x^{\alpha}e^{-x}(L_{N-1}(x)L'_{N-2}(x)-L_{N-1}'(x)L_{N-2}(x))\,\mathrm{d}x\ . \label{xi}
\end{equation}
The difference of Laguerre polynomials in \eqref{xi} is nothing but the Christoffel-Darboux form of the eigenvalue density $\rho^{(2)}_{N-1}(x)$, cf. formula \eqref{beta2dens}. We deduce that
\begin{equation}
\xi_{N}(s) = \frac{1}{4d_{N}}M^{(2)\hspace{1pt}\prime}_{N-1}(s)\ ,
\end{equation}
and hence $K_{N}$ can be expressed in terms of explicitly known quantities. Using \eqref{eq:u_n1} 
and solving \eqref{solvekn}-\eqref{xieqn} for $2K_{N}$, we insert the results into \eqref{kneqn} and obtain the closed equation
\begin{equation}
(s^{2}-1/4)Y''+\left(4s+\frac{\alpha-1}{2}+N\right)Y'+\frac{9-\alpha^{2}}{4}Y+\frac{1}{4d_{N}}(3s^{2}M_{N-1}^{(2)\prime\prime}+(11s-1)M_{N-1}^{(2)\prime}+5M_{N-1}^{(2)})=0\ . \label{t1eqn}
\end{equation}
All that remains is to rewrite this in terms of $M_{N}^{(1)}(s)$ and its derivatives using \eqref{eq:M1M2}:
\begin{equation}
\begin{split}
&(4s^{2}-1)M_{N}^{(1)\prime\prime}+(16s+2(\alpha-1)+4N)M_{N}^{(1)\prime}+(9-\alpha^{2})M_{N}^{(1)}\\
&+(1-s^{2})M_{N-1}^{(2)\prime\prime}-(5s+2\alpha-1+4N)M_{N-1}^{(2)\prime}+(\alpha^{2}-4)M_{N-1}^{(2)}=0\ . \label{QMcoupled}
\end{split}
\end{equation}
The result \eqref{odeb1} now follows after using \eqref{ode} of Theorem \ref{thm:Haagerup} to eliminate the variable $M_{N-1}^{(2)\prime\prime}(s)$ appearing in \eqref{QMcoupled}.
\end{proof}
\section{Conclusions and remarks}
We considered the average of power traces $\E\left[\Tr Q^k\right]$ of the time-delay matrix for ballistic chaotic cavities. The sample-to-sample average can be computed using a Random Matrix Theory ansatz.  The large-$N$ expansion of these averages have been computed (recursively) for systems with and without broken time reversal symmetry ($\beta=2$ and $\beta=1$, respectively); we suggest that the coefficients of the large-$N$ expansion are whole numbers (Conjecture~\ref{conj:1}), thus extending a previous conjecture for the leading order of higher cumulants. A heuristic explanation of the conjecture comes from the semiclassical approach to chaotic scattering. 

We conclude with a last  remark. We recall that, up to a scaling factor, the moments of the time-delay matrix are statistically identical to the inverse moments $\Tr (W_N^{-k})$  of a specific Wishart ensemble. More precisely, for $\beta=2$ for concreteness, $W_N=XX^{\dagger}$ where $X=(x_{ij})$ is a $N\times 2N$ matrix with independent standard complex Gaussian entries. 
It is known that the large-$N$ expansion of moments of Gaussian (GUE) matrices are integers and they are related to a very precise enumeration problem.  Similar enumeration problems emerge for the  moments of Wishart matrices. The salient feature of Gaussian and Wishart matrices, and the one which is fundamental for
applications to graphical enumeration, is that expectations of general polynomial functions
can be reduced to a counting of Wick pairings. When considering the average of traces of powers of \emph{inverse} Wishart matrices (our situation), Wick's calculus no longer applies, but the integer nature of the coefficients nevertheless suggests an underlying combinatorial structure yet to be unveiled. 
\begin{acknowledgments}
FDC and FM acknowledge  support  from EPSRC Grant No.\ EP/L010305/1. FDC acknowledges partial support from the Italian National Group of Mathematical Physics (GNFM-INdAM). NS wishes to acknowledge support of a
Leverhulme Trust Early Career Fellowship (ECF-2014-309). PV acknowledges the
stimulating  research  environment  provided  by  the  EPSRC  Centre  for  Doctoral
Training in Cross-Disciplinary Approaches to Non-Equilibrium Systems (CANES,
EP/L015854/1). FDC wishes to thank Margherita Disertori for stimulating discussions. The authors are grateful to a referee for useful comments. This paper has no underlying data.
\end{acknowledgments}

\appendix
\section{Numerical tables}
\label{app:num}
Here are a few values of the moments $\tau_k^{(\beta)}=N^{k-1}\E[ \Tr Q^k]$. Recall that $\tau_0^{(\beta)}=\tau_1^{(\beta)}=1$. For $\beta=2$:
\barr
\tau_2^{(2)}&=&\frac{2 N^2}{N^2-1},\nonumber\\
\tau_3^{(2)}&=&\frac{6 N^4}{(N^2-4)(N^2-1)},\nonumber\\
\tau_4^{(2)}&=&\frac{22 N^6+2N^4}{(N^2-9)(N^2-4)(N^2-1)},\nonumber\\
\tau_5^{(2)}&=&\frac{90 N^8+30N^6}{(N^2-16)(N^2-9)(N^2-4)(N^2-1)},\nonumber\\
\tau_6^{(2)}&=&\frac{394 N^{10}+310 N^8+16 N^6}{(N^2-25)(N^2-16)(N^2-9)(N^2-4)(N^2-1)}.\nonumber\\
\earr
For $\beta=1$:
\barr
\tau_2^{(1)}&=&\frac{2 N^2}{(N-2) (N+1)},\nonumber\\
\tau_3^{(1)}&=&\frac{6 N^4}{(N-4) (N-2) (N+1) (N+2)},\nonumber\\
\tau_4^{(1)}&=&\frac{22 N^6-4 N^5}{(N-6) (N-4) (N-2) (N+1) (N+2) (N+3)},\nonumber\\
\tau_5^{(1)}&=&\frac{90 N^8-60 N^7}{(N-8) (N-6) (N-4) (N-2) (N+1) (N+2) (N+3) (N+4)},\nonumber\\
\tau_6^{(1)}&=&\frac{394 N^{10}-998 N^9-48 N^8-184 N^7-64 N^6}{(N-10) (N-8) (N-6) (N-4) (N-3) (N+1) (N+2) (N+3) (N+4) (N+5)}.\nonumber\\
\earr

\begin{table}[h]
\begin{tabular}{l*{8}{l}}
\toprule
$\boxed{\tau_{k,g}^{(2)}}$ & \multicolumn{7}{c}{$g$} \\
$k\qquad$ & 0 & 1 & 2 & 3 & 4 & 5 & 6\\
\cmidrule(rr){2-8}
 0 & $1$ & $0$ &  $0$ &  $0$ &  $0$ & $0$ & $0$\\
 1 & $1$ &  $0$ &   $0$ & $0$ &  $0$ &  $0$ & $0$ \\
2 &   $2$ & $0$ & $2$ &  $0$ & $2$ &  $0$ & $2$  \\
3 & $6$ &  $0$ & $30$ &  $0$& $126$ & $0$ & $510$  \\
4 &  $22$ &  $0$ & $310$ & $0$ & $3262$ &  $0$  & $31270$ \\
5 &  $90$ & $0$ &$ 2730$ &  $0$& $57330$ & $0$ & $1048410$\\
6 & $394$ &  $0$ &$ 21980$ &  $0$ & $805854$ &  $0$ & $24848560$ \\
7 & $1806$ &  $0$ & $167076$ &  $0$& $9781002$ &  $0$ & $468660192$ \\
8 & $8558$ &  $0$ &$ 1220100$ &  $0$ & $106963626$ &  $0$ & $7510405760$ \\
\toprule
\toprule
$\boxed{\tau_{k,g}^{(1)}}$& \multicolumn{7}{c}{$g$} \\
$k\qquad$ & 0 & 1 & 2 & 3 & 4 & 5 & 6\\
\cmidrule(rr){2-8} 
0 & $1$ & $0$ & $0$ & $0$ & $0$ & $0$ & $0$  \\
1 & $1$ & $0$ & $0$ & $0$ & $0$ & $0$ & $0$ \\
2 & $2$ & $2$ & $6$ & $10$ & $22$ & $42$ & $86$ \\
3 & $6$ & $18$ & $102$ & $378$ & $1638$ & $6426$ & $26214$ \\
4 & $22$ & $128$ & $1142$ & $7048$ & $47454$ & $291696$ & $1821094$ \\
5 & $90$ & $840$ & $10650$ & $96000$ & $904530$ & $7786680$ & $66945450$ \\
6 & $394$ & $5306$ & $89576$ & $1092460$ & $13529862$ & $152881422$ & $1704027412$\\
7 & $1806$ & $32802$ & $705012$ & $11060700$ & $172576362$ & $2451889734$ & $34038711504$\\
8 & $8558$ & $200064$ & $5297924$ & $103150528$ & $1966038698$ & $34052988736$ & $572050771840$\\
\bottomrule
\end{tabular}
\caption{A few values of $\tau_{k,g}^{(2)}$ (top) and $\tau_{k,g}^{(1)}$ (bottom) computed from the recursions~\eqref{eq:recoeff} and ~\eqref{eq:recoeffb1}-\eqref{eq:recoeffb1bis}. }
\label{tab:I}
\end{table}

\section{The generating functions}
\label{app:num}
The first few polynomials $R_g(z)$ defined in~\eqref{eq:R_g} are:
\begin{align*}
R_{2}(z) &= 2 z^2,\\
R_{4}(z) &= 16 z^6-24 z^5+6 z^4+60 z^3+2 z^2,\\
R_{6}(z) &= 360 z^{10}-96 z^9-304 z^8+4464 z^7-14110 z^6\\
         &\quad +12600 z^5+7572 z^4+408z^3+2 z^2,\\
R_{8}(z) &= 16128 z^{14}+46656 z^{13}-64776 z^{12}+413136 z^{11}-2210472 z^{10}\\
         &\quad +5724216z^9-5754378 z^8-3243996 z^7+7652766 z^6+2426400 z^5\\
         &\quad +152298 z^4+1908z^3+2 z^2,\\
R_{10}(z)   &= 1209600 z^{18}+9106560 z^{17}+936576 z^{16}+34986528 z^{15}-351879792 z^{14}\\
         &\quad +1396450368 z^{13}-2988047424 z^{12}+2088897408 z^{11}+5092739154 z^{10}\\
         &\quad -11252766096 z^9+2587036584 z^8+6426673488 z^7+1479326572 z^6\\
         &\quad +98620176 z^5+1927176 z^4+8016 z^3+2 z^2.
\end{align*}
The first few polynomials $P_k(\zeta)$ of Corollary~\ref{cor:rec2simp} have the following simple expression:
\begin{align*}
P_{2}(\zeta) &= 2, &P_{3}(\zeta) &= 6,\\
P_{4}(\zeta) &= 2\zeta+22, &P_{5}(\zeta) &= 30\zeta+90,\\
P_{6}(\zeta) &= 16\zeta^2 +310 \zeta +394, &P_{7}(\zeta) &=504 \zeta ^2+2730 \zeta+1806,\\
P_{8}(\zeta) &= 360 \zeta ^3+9422 \zeta ^2+21980 \zeta +8558, &P_{9}(\zeta) &= 18264 \zeta ^3+135954 \zeta ^2+167076 \zeta +41586.
\end{align*}
Here are the first few generating functions $F^{(1)}_g(z)$ computed from the recursion of Corollary \ref{cor:rec2simp1b1}:
\begin{align*}
F^{(1)}_0(z) &=\frac{3-z}{2}-\frac{y(z)^{1/2}}{2},\\
F^{(1)}_1(z) &=\frac{1-3 z}{2 y(z)}-\frac{1}{2 y(z)^{1/2}},\\
F^{(1)}_2(z) &=\frac{z^2-3 z}{y(z)^2}+\frac{3 z^3-4 z^2+3 z}{y(z)^{5/2}},\\
F^{(1)}_3(z) &=-\frac{2 z}{y(z)^{7/2}} \left(2 z^3-9 z^2+19 z+3\right)-\frac{2 z}{y(z)^4} \left(6 z^4-5 z^3+9 z^2-15 z-3\right),\\
F^{(1)}_4(z) &=\frac{2}{y(z)^{11/2}} \left(36 z^7+20 z^6+24 z^5-219 z^4+216 z^3+163 z^2+6 z\right)\\
&+\frac{2}{y(z)^6} \left(12 z^8-132 z^7+618 z^6-1830 z^5+1840 z^4+720 z^3-134 z^2-6 z\right),\\
F^{(1)}_5(z) &=-\frac{2 z}{y(z)^{13/2}} \left(96 z^7-456 z^6+2992 z^5-7068 z^4+3089 z^3+8214 z^2+979 z+12\right)\\
&-\frac{2 z}{y(z)^7} \left(288 z^8+776 z^7-336 z^6-2916 z^5+6276 z^4-1312 z^3-7560 z^2-964 z-12\right),\\
F^{(1)}_6(z) &=\frac{2}{{y(z)^{17/2}}} (2880 z^{11}+15588 z^{10}-3552 z^9-53360 z^8+139938 z^7\\
&-121877 z^6-186156 z^5+329334 z^4+100650 z^3+4699 z^2+24 z)\\
&+\frac{2}{y(z)^9}(960 z^{12}-9504 z^{11}+62944 z^{10}-373248 z^9+997768 z^8-981480 z^7\\
&-1012248 z^6+2243256 z^5+250584 z^4-75672 z^3-4584 z^2-24 z).
\end{align*}

\end{document}